\newtheorem{theorem}{Theorem}
\newcounter{mytempeqncnt}
\title{Frequency Offset Estimation and \\ Training Sequence Design for MIMO OFDM
}
\author{{Yanxiang~Jiang,~\IEEEmembership{Student Member,~IEEE,}
Hlaing~Minn,~\IEEEmembership{Member,~IEEE,}
Xiqi~Gao,~\IEEEmembership{Senior Member,~IEEE,} Xiaohu~You,
and~Yinghui~Li~\IEEEmembership{Student Member,~IEEE}}
\thanks{\small {Manuscript received October 18, 2006; revised February 5, 2007;
accepted February 10, 2007. The associate editor coordinating the
review of this manuscript and approving it for publication was Dr.
Defeng (David) Huang. The work of Yanxiang~Jiang, Xiqi~Gao and
Xiaohu~You was supported in part by National Natural Science
Foundation of China under Grants 60496311 and 60572072, the China
High-Tech 863 Project under Grant 2003AA123310 and 2006AA01Z264, and
the International Cooperation Project on Beyond 3G Mobile of China
under Grant 2005DFA10360. The work of Hlaing~Minn and Yinghui~Li was
supported in part by the Erik Jonsson School Research Excellence
Initiative, the University of Texas at Dallas, USA. This paper was
presented in part at the IEEE International Conference on
Communications (ICC), Istanbul, Turkey, June 2006.}}
\thanks{\small {Yanxiang~Jiang, Xiqi~Gao and Xiaohu~You are with the National
Mobile Communications Research Laboratory, Southeast University,
Nanjing 210096, China (e-mail: \{yxjiang, xqgao, xhyu\}
@seu.edu.cn).}}
\thanks{\small {Hlaing~Minn and Yinghui~Li are with the Department of Electrical
Engineering, University of Texas at Dallas, TX 75083-0688, USA
(e-mail: \{hlaing.minn, yinghui.li\}@utdallas.edu).}}
}
\begin{document}
\maketitle

\begin{abstract}
    This paper addresses carrier frequency offset (CFO) estimation and
training sequence design for multiple-input multiple-output (MIMO)
orthogonal frequency division multiplexing (OFDM) systems over
frequency selective fading channels. By exploiting the orthogonality
of the training sequences in the frequency domain, integer CFO
(ICFO) is estimated. {With the uniformly spaced non-zero pilots in
the training sequences} and the corresponding geometric mapping,
fractional CFO (FCFO) is estimated through the roots of a real
polynomial. Furthermore, the condition for the training sequences to
guarantee estimation identifiability is developed. Through the
analysis of the correlation property of the training sequences, two
types of sub-optimal training sequences generated from the Chu
sequence are constructed. Simulation results verify the good
performance of the CFO estimator assisted by the proposed training
sequences.
\end{abstract}

\begin{keywords}
MIMO-OFDM, frequency selective fading channels, training sequences,
frequency offset estimation.
\end{keywords}

\section{Introduction}

    Orthogonal frequency division multiplexing (OFDM)
transmission is receiving increasing attention in recent years due
to its robustness to frequency-selective fading and its
subcarrier-wise adaptability. On the other hand, multiple-input
multiple-output (MIMO) systems attract considerable interest due to
the higher capacity and spectral efficiency that they can provide in
comparison with single-input single-output (SISO) systems.
Accordingly, MIMO-OFDM has emerged as a strong candidate for beyond
third generation (B3G) mobile wide-band communications
\cite{Stuber}.

    It is well known that SISO-OFDM is highly sensitive to
carrier frequency offset (CFO), and accurate estimation and
compensation of CFO is very important \cite{Moose}. A number of
approaches have dealt with CFO estimation in a SISO-OFDM setup
\cite{Beek, Tureli, Moose, Morelli, David01, David02}. According to
whether the CFO estimators use training sequences or not, they can
be classified as blind ones \cite{Beek} \cite{Tureli} and
training-based ones \cite{Moose, Morelli, David01, David02}. Similar
to SISO-OFDM, MIMO-OFDM is also very sensitive to CFO. Moreover, for
MIMO-OFDM, there exists multi-antenna interference (MAI) between the
received signals from different transmit antennas. The MAI makes CFO
estimation more difficult, and a careful training sequence design is
required for training-based CFO estimation. However, unlike
SISO-OFDM, only a few works on CFO estimation for MIMO-OFDM have
appeared in the literature. In \cite{Ywyao}, a blind kurtosis-based
CFO estimator for MIMO-OFDM was developed. For training-based CFO
estimators, the overviews concerning the necessary changes to the
training sequences and the corresponding CFO estimators when
extending SISO-OFDM to MIMO-OFDM were provided in \cite{Mody,
Zelst}. However, with the provided training sequences in
\cite{Mody}, satisfactory CFO estimation performance cannot be
achieved. With the training sequences in \cite{Zelst}, the training
period grows linearly with the number of transmit antennas, which
results in an increased overhead. In \cite{Oliver}, a white sequence
based maximum likelihood (ML) CFO estimator was addressed for MIMO,
while a hopping pilot based CFO estimator was proposed for MIMO-OFDM
in \cite{Xiaoli}. Numerical calculations of the CFO estimators in
\cite{Oliver} \cite{Xiaoli} require a large point discrete Fourier
transform (DFT) operation and a time consuming line search over a
large set of frequency grids, which make the estimation
computationally prohibitive. To reduce complexity, computationally
efficient CFO estimation was introduced in \cite{Simoens} by
exploiting proper approximations. However, the CFO estimator in
\cite{Simoens} is only applied to flat-fading MIMO channels.


    When training sequence design for CFO estimation is concerned,
it has received relatively little attention. It was investigated for
single antenna systems in \cite{Stoica}, where a white sequence was
found to minimize the worst-case asymptotic Cramer-Rao bound (CRB).
Recently, an improved training sequence and structure design was
developed in \cite{Minn_Periodic} by exploiting the CRB and received
training signal statistics. In \cite{Ghogho}, training sequences
were designed for CFO estimation in MIMO systems using a
channel-independent CRB. In \cite{MinnFO}, the effect of CFO was
incorporated into the mean-square error (MSE) optimal training
sequence designs for MIMO-OFDM channel estimation in \cite{Minn1}.
Note that optimal training sequence design for MIMO-OFDM CFO
estimation in frequency selective fading channels is still an open
problem.

    In this paper, we propose a new training-based CFO estimator
for MIMO-OFDM over frequency selective fading channels. To avoid the
large point DFT operation and the time consuming line search and
also to guarantee the estimate precision, we propose to first
estimate integer CFO (ICFO) and then fractional CFO (FCFO). Next, we
relate FCFO estimation to direction of arrival (DOA) estimation in
uniform linear arrays (ULA). Then, we propose to exploit a geometric
mapping to transform the complex polynomial related to FCFO
estimation into a real polynomial. Correspondingly, FCFO is
estimated through the roots of the real polynomial. In the
derivation of the CFO estimator, we also develop the identifiability
condition concerning the training sequences. Furthermore, exploiting
the well-studied results for DOA estimation, we transform the
problem of training sequence design for CFO estimation into the
study of the correlation property of the training sequences, and
propose to construct them from the Chu sequence \cite{Chu}.

   The rest of this paper is organized as follows. In Section II,
we briefly describe the MIMO-OFDM system model. The proposed CFO
estimator including ICFO and FCFO estimation is presented in Section
III.  The aspect concerning training sequence design is addressed in
Section IV. Simulation results are shown in Section V. Final
conclusions are drawn in Section VI.

    \textsl{Notations}: Upper (lower) bold-face letters are used for matrices
(column vectors). Superscripts $* $, $T $ and $H $ denote conjugate,
transpose and Hermitian transpose, respectively. $( \cdot )_P $
denotes the residue of the number within the brackets modulo $P $.
$\Re(\cdot)$ and $\Im(\cdot)$ denote the real and imaginary parts of
the enclosed parameters, respectively. $\lfloor \cdot \rfloor$, $||
\cdot || ^2$, $\mathrm{E}[\cdot]$ and $\otimes$ denote the floor,
Euclidean norm-square, expectation and Kronecker product operators,
respectively. $\mathbf{sign}(\cdot)$ denotes the signum function and
$\mathbf{sign}(0) = 1$ is assumed. ${[\boldsymbol{x}]}_{m} $ denotes
the $m$-th entry of a column vector ${\boldsymbol{x}} $.
$\boldsymbol{x}^{(m)} $ denotes the $m$-cyclic-down-shift version of
$\boldsymbol{x}$ for $m>0$ and $|m|$-cyclic-up-shift version of
$\boldsymbol{x}$ for $m<0$. $\mathrm{diag}\{ \boldsymbol{x} \}$
denotes a diagonal matrix with the elements of $\boldsymbol{x}$ on
its diagonal. ${[\boldsymbol{X}]} _{m,n} $ denotes the $(m,n)$-th
entry of a matrix ${\boldsymbol{X}} $. $\boldsymbol{F}_N$ and
$\boldsymbol{I}_{N}$ denote the $N\times N$ {unitary DFT matrix} and
the $N\times N$ identity matrix, respectively.
$\boldsymbol{e}_N^{k}$ denotes the $k$-th column vector of
$\boldsymbol{I}_N$. $\boldsymbol{1}_Q$ ($\boldsymbol{0}_Q$) and
$\boldsymbol{0}_{P \times Q}$ denote the $Q \times 1$ all-one
(all-zero) vector and $P \times Q$ all-zero matrix, respectively.
$\boldsymbol{J}_Q$ denotes the $Q \times Q$ exchange matrix with
ones on its anti-diagonal and zeros elsewhere. {Unless otherwise
stated, $0 \le \mu \le N_t-1$ and $0 \le \nu \le N_r-1$ are
assumed.}

\section{System Model}

    Let us consider a MIMO-OFDM system with $N_t$ transmit antennas,
$N_r$ receive antennas and $N$ subcarriers. Suppose the training
sequence transmitted from the $\mu$-th antenna is denoted by the $N
\times 1$ vector $\tilde{\boldsymbol{t}} _{\mu} $. Before
transmission, this vector is processed by an inverse discrete
Fourier transform (IDFT), and a cyclic prefix (CP) of length $N_g$
is inserted. We assume that $N_g \ge L -1$, where $L$ is the maximum
length of all the frequency selective fading channels. We further
assume that all the transmit-receive antenna pairs are affected by
the same CFO. Define
\[
\boldsymbol{D} _{\bar{N}} (\varepsilon) = \mathrm{diag}\{[1,e^{j2\pi
\varepsilon /N} , \cdots ,e^{j2\pi \varepsilon (\bar{N}-1)/N}]^T \},
\]
where $\varepsilon$ is the frequency offset normalized by the
subcarrier spacing. Suppose the length-$L$ channel impulse response
from the $\mu$-th transmit antenna to the $\nu$-th receive antenna
is denoted by the $L \times 1$ vector $\boldsymbol{h}^{(\nu, \mu)}$.
Then, after removing the CP at the $\nu$-th receive antenna, the $N
\times 1$ received vector $\boldsymbol{y}_{\nu}$ can be written as
\cite{Xiaoli}
\begin{multline}
\boldsymbol{y}_{\nu} = \sqrt{N} e^{j{2\pi \varepsilon N_g}/{N}}
\boldsymbol{D}_{N}(\varepsilon) \sum\limits_{\mu  = 0}^{N_t  - 1}
\left\{\boldsymbol{F}_N^H \mathrm{diag}\{ \tilde
{\boldsymbol{h}}^{(\nu ,\mu )}\} \tilde {\boldsymbol{t}}_{\mu}
\right\}  \\ + \boldsymbol{w} _{\nu},
\end{multline}
where
\[
\tilde {\boldsymbol{h}}^{(\nu ,\mu )} = \boldsymbol{F}_N
[\boldsymbol{e}_N^{0 }, \boldsymbol{e}_N^{1 }, \cdots,
\boldsymbol{e}_N^{L-1}] \boldsymbol{h}^{(\nu ,\mu )},
\]
and $\boldsymbol{w}_{\nu}$ is an $N \times 1$ vector of additive
white complex Gaussian noise (AWGN) samples with zero-mean and equal
variance of $\sigma_w^2$.

    The goal of this paper is to design the training sequences $\{
\tilde{\boldsymbol{t}}_{\mu}  \} _{\mu = 0}^{N_t  - 1}$ and estimate
$\varepsilon$ from the observation of $\{ \boldsymbol{y}_{\nu} \}
_{\nu = 0}^{N_r - 1}$.

\section{Frequency Offset Estimation for MIMO OFDM}

    Let $Q=N/P$ with $(N)_{P} = 0$. Design
\begin{equation*}
(\mathrm{C}0) \ 0 \leq i_0 < i_1 < \cdots < i_{\mu} < \cdots <
i_{N_t-1} <Q.
\end{equation*}
Define
\[
\boldsymbol{\Theta} _{q} = [\boldsymbol{e}_{N} ^{q}, \boldsymbol{e}
_{N}^{q+Q}, \cdots,\boldsymbol{e}_{N}^ {q+(P-1)Q}], \ 0 \le q < Q.
\]
Let $\tilde{\boldsymbol{s}}_{\mu}$ denote a length-$P$ sequence
whose elements are all non-zero. Then, we propose to construct the
training sequence transmitted from the $\mu$-th antenna as
$\tilde{\boldsymbol{t}}_{\mu} = \boldsymbol{\Theta} _{i_{\mu}}
\tilde{\boldsymbol{s}}_{\mu}$ (C1). Without loss of generality, the
total energy allocated to training is supposed to be split equally
between the transmit antennas, i.e., $
||\tilde{\boldsymbol{s}}_{\mu}||^2 = N/N_t $ (C2). Note that the
entire training symbol is utilized by our proposed CFO estimator.

\begin{figure*}[!b]
\hrulefill
\normalsize
\setcounter{mytempeqncnt}{\value{equation}}
\setcounter{equation}{6}
\begin{multline}\label{5-2}
\bar{\boldsymbol{y}} = \sqrt{N} e^{j{2\pi \varepsilon_f N_g }/{N}}
\bigl \{\boldsymbol{I}_{N_r} \otimes \{[\boldsymbol{D}_N(\beta_0),
\boldsymbol{D}_N(\beta_1), \cdots, \boldsymbol{D}_N(\beta_{\mu}),
\cdots, \boldsymbol{D}_N(\beta_{N_t-1}) ]  \\
\times (\boldsymbol{I}_{N_t} \otimes \boldsymbol{1}_Q \otimes
\boldsymbol{F}_P^H) \mathrm{diag} \{ [\tilde{\boldsymbol{s}}_{0}^T,
\tilde{\boldsymbol{s}}_{1}^T, \cdots, \tilde{\boldsymbol{s}}
_{\mu}^T, \cdots, \tilde{\boldsymbol{s}} _{N_t-1}^T] ^T\}
\boldsymbol{\breve{F}} \} \bigl\} \boldsymbol{h} + \boldsymbol{v},
\end{multline}
\setcounter{equation}{\value{mytempeqncnt}}
\end{figure*}

\subsection{ICFO Estimation}

    Let $\boldsymbol{y} = [\boldsymbol{y}_{0}^T, \boldsymbol{y}_{1}^T,
\cdots, \boldsymbol{y}_{\nu}^T, \cdots, \boldsymbol{y}_{N_r-1}^T]^T$
denote the $N_r N \times 1$ cascaded vector from the $N_r$ receive
antennas. Then, $\boldsymbol{y}$ can be written as
\begin{equation}\label{3-01}
\boldsymbol{y}  = \sqrt{N} e^{j{{2\pi \varepsilon N_g }} / {N}} \{
\boldsymbol{I}_{N_r} \otimes [\boldsymbol{D}_N (\varepsilon )
\boldsymbol{S}]\} \boldsymbol{h} + \boldsymbol{w},
\end{equation}
where \setlength{\arraycolsep}{0.0em}
\begin{eqnarray*}\label{psiL}
\boldsymbol{{h}} &{}={}& [\boldsymbol{{h}}_{0}^T,
\boldsymbol{{h}}_{1}^T, \cdots, \boldsymbol{{h}}_{\nu}^T, \cdots,
\boldsymbol{{h}}_{N_r-1}^T]^T, \\
{\boldsymbol{h}}_{\nu} &{}={}& [({\boldsymbol{h}}^{(\nu,0)})^T,
({\boldsymbol{h}}^{(\nu,1)})^T, \cdots,
({\boldsymbol{h}}^{(\nu,\mu)})^T, \cdots,
({\boldsymbol{h}}^{(\nu,N_t-1)})^T ]^T, \\
\boldsymbol{S} &{}={}& {{\boldsymbol{\bar{F}}}} ^H \mathrm{diag} \{
[\tilde{\boldsymbol{s}}_{0}^T, \tilde{\boldsymbol{s}}_{1}^T, \cdots,
\tilde{\boldsymbol{s}} _{\mu}^T, \cdots, \tilde{\boldsymbol{s}}
_{N_t-1}^T] ^T\} {{\boldsymbol{\breve{F}}}}, \\
\boldsymbol{\bar{F}} &{}={}& [\boldsymbol{\Theta} _{i_0},
\boldsymbol{\Theta} _{i_1}, \cdots, \boldsymbol{\Theta} _{i_\mu},
\cdots, \boldsymbol{\Theta} _{i_{N_t -1} }] ^T \boldsymbol{{F}}_N , \\
{{\boldsymbol{\breve{F}}}} &{}={}& [ \boldsymbol{e}_{N_t}^{0}
\otimes \boldsymbol{\Theta}_{i_0}^T, \boldsymbol{e}_{N_t}^{1}
\otimes \boldsymbol{\Theta}_{i_1}^T, \cdots,
\boldsymbol{e}_{N_t}^{\mu} \otimes \boldsymbol{\Theta}_{i_\mu}^T,
\\
&{}{}& \cdots, \boldsymbol{e}_{N_t}^{N_t -1} \otimes
\boldsymbol{\Theta}_{i_{N_t -1}}^T]  \{ \boldsymbol{I}_{N_t} \otimes
[\boldsymbol{F}_N
[\boldsymbol{I}_L, \boldsymbol{0}_{L \times (N-L)}]^T] \} , \\
\boldsymbol{w} &{}={}& [\boldsymbol{w}_{0}^T, \boldsymbol{w}_{1}^T,
\cdots, \boldsymbol{w}_{\nu}^T, \cdots, \boldsymbol{w}_{N_r-1}^T]^T.
\end{eqnarray*}\setlength{\arraycolsep}{5pt}\hspace*{-3pt}

    From (\ref{3-01}), the ML estimation of $\varepsilon$ can be readily
obtained \cite{Kay} as follows
\begin{equation}\label{3-3}
\hat{\varepsilon} = \mathop {\arg \max
}\limits_{\tilde{\varepsilon}} \bigl\{ \boldsymbol{y}^H
\{\boldsymbol{I}_{N_r} \otimes [\boldsymbol{D}_N
(\tilde{\varepsilon} ) \boldsymbol{\tilde{S}}
(\boldsymbol{\tilde{S}}^H \boldsymbol{\tilde{S}})^{ - 1}
\boldsymbol{\tilde{S}}^H \boldsymbol{D}_N( - \tilde{\varepsilon}
)]\} \boldsymbol{y} \bigl\},
\end{equation}
where
\[
\boldsymbol{\tilde{S}} = {{\boldsymbol{\bar{F}}}}
 ^H \mathrm{diag} \{ [\tilde{\boldsymbol{s}}_{0}^T,
\tilde{\boldsymbol{s}}_{1}^T, \cdots, \tilde{\boldsymbol{s}}
_{\mu}^T, \cdots,\tilde{\boldsymbol{s}} _{N_t-1}^T] ^T\}.
\]
With condition (C0), which implies the orthogonality of the training
sequences in the frequency domain, matrix inversion involved in
(\ref{3-3}) can be avoided, and (\ref{3-3}) can thus be simplified
as follows
\begin{equation}\label{aequ1}
\hat{\varepsilon} = \mathop {\arg \max
}\limits_{\tilde{\varepsilon}} \bigl\{ || \{ \boldsymbol{I}_{N_r}
\otimes [ \boldsymbol{\bar{F}} \boldsymbol{D}_N ( -
\tilde{\varepsilon} )] \} \boldsymbol{y} ||^2 \bigl \}.
\end{equation}
Actually, with condition (C0), our proposed training sequences can
be treated as frequency-division multiplexing (FDM) pilot allocation
type sequences \cite{Minn1}.

    Let $\boldsymbol{l}$ denote the pilot location vector which is given by
$\boldsymbol{l}  =  \sum \limits_{\mu= 0} ^{N_t-1} {\boldsymbol {e}
_Q ^{i_{\mu}} } $. Then,  we have as follows.
\begin{theorem}\label{theorem1}
With $\tilde{\varepsilon}, \varepsilon \in ( -\lfloor Q/2\rfloor
,Q-\lfloor Q/2 \rfloor]$, $ \tilde{\varepsilon} = \varepsilon$
uniquely maximizes the cost function in (\ref{aequ1}) for
\textsl{any} $\boldsymbol{h} ^{(\nu,\mu)}(\ne \boldsymbol{0}_L)$
with the following condition
\begin{multline*}
\hspace*{-15pt} \mathrm{(C3)} \ (N-N_tP) \ge N_tP \ \& \ P \ge L \
\& \\
(\boldsymbol{1}_Q - \boldsymbol{l})^T \boldsymbol{l}^{(q)} >
0, \forall q \in \{1,2, \cdots, Q-1 \}.
\end{multline*}
\end{theorem}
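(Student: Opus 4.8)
The plan is to prove the identifiability claim for the noise-free received signal of (\ref{3-01}), as is usual for such results. Writing $\delta:=\varepsilon-\tilde\varepsilon$ and using $\boldsymbol{D}_N(-\tilde\varepsilon)\boldsymbol{D}_N(\varepsilon)=\boldsymbol{D}_N(\delta)$, the objective in (\ref{aequ1}) equals $N\,C(\delta)$ with $C(\delta):=\sum_{\nu=0}^{N_r-1}\|\boldsymbol{\bar{F}}\boldsymbol{D}_N(\delta)\boldsymbol{S}\boldsymbol{h}_\nu\|^2$. The rows of $\boldsymbol{\bar{F}}=[\boldsymbol{\Theta}_{i_0},\dots,\boldsymbol{\Theta}_{i_{N_t-1}}]^T\boldsymbol{F}_N$ are orthonormal and, after the unitary $\boldsymbol{F}_N$, select exactly the frequency-bin set $\mathcal{P}:=\bigcup_\mu\{i_\mu+kQ:0\le k<P\}$; since $\boldsymbol{D}_N(\delta)$ is unitary this gives $C(\delta)=\sum_\nu\|P_{\mathcal{P}}\boldsymbol{F}_N\boldsymbol{D}_N(\delta)\boldsymbol{S}\boldsymbol{h}_\nu\|^2\le\sum_\nu\|\boldsymbol{S}\boldsymbol{h}_\nu\|^2=:C_{\max}$, with equality iff every $\boldsymbol{F}_N\boldsymbol{D}_N(\delta)\boldsymbol{S}\boldsymbol{h}_\nu$ is supported on $\mathcal{P}$. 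Because $\boldsymbol{F}_N\boldsymbol{S}\boldsymbol{h}_\nu=\sum_\mu\boldsymbol{\Theta}_{i_\mu}\boldsymbol{c}_\mu^{(\nu)}$ with $\boldsymbol{c}_\mu^{(\nu)}:=\mathrm{diag}\{\tilde{\boldsymbol{s}}_\mu\}\boldsymbol{\Theta}_{i_\mu}^T\tilde{\boldsymbol{h}}^{(\nu,\mu)}$ is already supported on $\mathcal{P}$, the choice $\delta=0$ (i.e.\ $\tilde\varepsilon=\varepsilon$) attains $C_{\max}$. Since $\varepsilon$ and $\tilde\varepsilon$ both lie in a half-open interval of length $Q$, we have $\delta\in(-Q,Q)$, so it remains to prove $C(\delta)<C_{\max}$ for every $\delta\in(-Q,Q)\setminus\{0\}$, i.e.\ $P_{\mathcal{P}^c}\boldsymbol{F}_N\boldsymbol{D}_N(\delta)\boldsymbol{S}\boldsymbol{h}_\nu\ne\boldsymbol{0}$ for some $\nu$.

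A fact I will use repeatedly is that $\boldsymbol{c}_\mu^{(\nu)}\ne\boldsymbol{0}$ whenever $\boldsymbol{h}^{(\nu,\mu)}\ne\boldsymbol{0}_L$: the map $\boldsymbol{h}^{(\nu,\mu)}\mapsto\boldsymbol{\Theta}_{i_\mu}^T\tilde{\boldsymbol{h}}^{(\nu,\mu)}$ is a $P\times L$ Vandermonde-type matrix with the $P$ distinct nodes $\{e^{j2\pi k/P}\}_{k=0}^{P-1}$, which has full column rank because $P\ge L$, and $\tilde{\boldsymbol{s}}_\mu$ has no zero entry by (C1); this is the role of $P\ge L$ in (C3), and it makes $\boldsymbol{S}\boldsymbol{h}_\nu\ne\boldsymbol{0}$ for every $\nu$ under the hypothesis. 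Now I split on $\delta$. If $\delta$ is a nonzero integer, then $0<|\delta|\le Q-1$, and $\boldsymbol{F}_N\boldsymbol{D}_N(\delta)$ cyclically shifts the frequency index by $\delta$, so $\boldsymbol{F}_N\boldsymbol{D}_N(\delta)\boldsymbol{S}\boldsymbol{h}_\nu$ is supported on $\mathcal{P}+\delta$, whose residues modulo $Q$ form the support of the cyclic shift $\boldsymbol{l}^{(r)}$ of the pilot-location vector $\boldsymbol{l}$ with $r\equiv\delta\pmod Q$, $r\in\{1,\dots,Q-1\}$. Condition (C3) gives $(\boldsymbol{1}_Q-\boldsymbol{l})^T\boldsymbol{l}^{(r)}>0$, i.e.\ $\boldsymbol{l}^{(r)}$ has a $1$ where $\boldsymbol{l}$ has a $0$, so $P$ of the bins of $\mathcal{P}+\delta$ fall in $\mathcal{P}^c$ and carry the (cyclically permuted) entries of some $\boldsymbol{c}_{\mu_0}^{(\nu)}\ne\boldsymbol{0}$; hence $\|P_{\mathcal{P}^c}\boldsymbol{F}_N\boldsymbol{D}_N(\delta)\boldsymbol{S}\boldsymbol{h}_\nu\|^2\ge\|\boldsymbol{c}_{\mu_0}^{(\nu)}\|^2>0$.

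The main obstacle is $\delta\notin\mathbb{Z}$, where $\boldsymbol{D}_N(\delta)$ smears the spectrum onto every bin. Here I would compute, by a Dirichlet-kernel summation, the closed form $[\boldsymbol{F}_N\boldsymbol{D}_N(\delta)\boldsymbol{S}\boldsymbol{h}_\nu]_m=c(\delta)\,\xi_m\sum_{p\in\mathcal{P}}\boldsymbol{X}_\nu[p]/(\xi_m-z_pw)$, where $\boldsymbol{X}_\nu:=\boldsymbol{F}_N\boldsymbol{S}\boldsymbol{h}_\nu$ is supported on $\mathcal{P}$, $z_p:=e^{j2\pi p/N}$, $w:=e^{j2\pi\delta/N}$, $\xi_m:=e^{j2\pi m/N}$, and $c(\delta)\propto e^{j2\pi\delta}-1\ne0$ since $\delta\notin\mathbb{Z}$. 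Thus $C(\delta)=C_{\max}$ would force the rational function $G(\xi):=\sum_{p\in\mathcal{P}}\boldsymbol{X}_\nu[p]/(\xi-z_pw)$, whose numerator $N_\nu(\xi):=\sum_{p\in\mathcal{P}}\boldsymbol{X}_\nu[p]\prod_{q\ne p}(\xi-z_qw)$ has degree at most $N_tP-1$ and whose $N_tP$ poles $\{z_pw\}$ are distinct and, since $\delta\notin\mathbb{Z}$, disjoint from $\{\xi_m\}$, to vanish at the $|\mathcal{P}^c|=N-N_tP$ distinct points $\{\xi_m:m\in\mathcal{P}^c\}$; but the condition $N-N_tP\ge N_tP$ of (C3) makes $N-N_tP>N_tP-1$, so $N_\nu\equiv0$, and reading off the residues at $\xi=z_pw$ gives $\boldsymbol{X}_\nu[p]=0$ for all $p\in\mathcal{P}$, i.e.\ $\boldsymbol{S}\boldsymbol{h}_\nu=\boldsymbol{0}$, contradicting the fact above. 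Hence $C(\delta)<C_{\max}$ on $(-Q,Q)\setminus\{0\}$, which together with $C(0)=C_{\max}$ proves that $\tilde\varepsilon=\varepsilon$ is the unique maximizer. The parts I would not belabor are the orthonormality bookkeeping for $\boldsymbol{\bar{F}}$, the Dirichlet-kernel evaluation of $c(\delta)$, and the Vandermonde rank count.
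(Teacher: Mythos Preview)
Your proof is correct. Both you and the paper reduce to showing that the complementary-subcarrier projection of $\boldsymbol{F}_N\boldsymbol{D}_N(\delta)\boldsymbol{S}\boldsymbol{h}_\nu$ cannot vanish for $\delta\in(-Q,Q)\setminus\{0\}$, and both split on whether $\delta$ is an integer. Your integer case coincides with the paper's: the cyclic frequency shift together with $(\boldsymbol{1}_Q-\boldsymbol{l})^T\boldsymbol{l}^{(q)}>0$ lands one entire pilot block $\boldsymbol{c}_{\mu_0}^{(\nu)}$ inside $\mathcal{P}^c$, and the Vandermonde/full-column-rank argument (using $P\ge L$ and the nonzero pilot amplitudes) shows that block is nonzero.

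The non-integer case is where you genuinely diverge. The paper assembles the $(N-N_tP)\times N_tP$ block matrix $\boldsymbol{\mathcal{C}}$, uses $N-N_tP\ge N_tP$ to extract a square top submatrix, block-diagonalizes it via the DFT into $P$ copies of an $N_t\times N_t$ Cauchy-type matrix $\boldsymbol{C}_{N_t}$ with entries $a_\mu/(a_\mu-b_{\mu'})$, and proves full rank by computing $\det\boldsymbol{C}_{N_t}$ explicitly (equations (\ref{app35})--(\ref{app-01})); the paper even hedges this step with ``with the assumption that $N_t$ and $Q$ are not very large.'' Your rational-function argument replaces all of that: the Dirichlet-kernel identity turns the complementary projection into the evaluation of $G(\xi)=\sum_{p\in\mathcal{P}}\boldsymbol{X}_\nu[p]/(\xi-z_pw)$ at the $N-N_tP$ distinct points $\{\xi_m:m\in\mathcal{P}^c\}$, none of which is a pole since $\delta\notin\mathbb{Z}$; the numerator has degree at most $N_tP-1$, so $N-N_tP\ge N_tP$ forces it to vanish identically, and reading off residues gives $\boldsymbol{X}_\nu=0$, contradicting $\boldsymbol{S}\boldsymbol{h}_\nu\ne\boldsymbol{0}$. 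This is shorter, avoids the block-circulant bookkeeping and the Cauchy determinant entirely, and carries no restriction on $N_t$ or $Q$. The paper's route, on the other hand, makes the algebraic structure (the Cauchy matrix hidden in the leakage pattern) more visible, which may be of independent interest.
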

\begin{proof}
See Appendix I.
\end{proof}
It follows immediately that the corresponding estimation is
identifiable for $\varepsilon \in ( -\lfloor Q/2\rfloor ,Q-\lfloor
Q/2 \rfloor]$ if condition (C3) is satisfied.

    From (\ref{aequ1}), the CFO can be estimated by exploiting
fast Fourier transform (FFT) interpolation and a time consuming line
search over a large set of frequency grids. But this approach is
computationally complicated and the estimate precision also depends
on the FFT size used. To reduce complexity, $\varepsilon $ is
divided into the ICFO $\varepsilon _i$ and the FCFO $\varepsilon
_f$. The ICFO $\varepsilon _i$ can be estimated by invoking only the
$N$ point FFT as follows
\begin{equation}\label{aequ2}
\hat{\varepsilon} _i  = \mathop {\arg \max }\limits_{{ -\lfloor
Q/2\rfloor < \tilde{\varepsilon}} _i \le Q-\lfloor Q/2 \rfloor}
\bigl \{ ||\{ \boldsymbol{I}_{N_r} \otimes [\boldsymbol{\bar{F}}
\boldsymbol{D}_N ( - {\tilde{\varepsilon}} _i )] \} \boldsymbol{y}
||^2 \bigl\}.
\end{equation}
The discussion on the uniqueness of $\hat{\varepsilon} _i =
\varepsilon _i $ in maximizing the estimation metric for
\textsl{any} $\boldsymbol{h} ^{(\nu,\mu)}(\ne \boldsymbol{0}_L)$ is
provided in Appendix II. Other ICFO estimators may also be used.
Once the ICFO is estimated, ICFO correction can be readily carried
out as follows
\begin{equation}\label{icfo6}
\bar{\boldsymbol{y}} =  { e^{-j{2\pi \hat{\varepsilon}_i N_g }/{N}}
  [\boldsymbol{I}_{N_r} \otimes
  \boldsymbol{D}_N(-\hat{\varepsilon}_i)]\boldsymbol{y}}.
\end{equation}

\begin{figure*}[!b]
\hrulefill
\normalsize
\setcounter{mytempeqncnt}{\value{equation}}
\setcounter{equation}{14}
\begin{equation}\label{Phiqq}
[\boldsymbol{\Phi}]_{q,q'} = j^{Q-1-q'} \sum \limits_{q'' =
\mathrm{max}\{0, q+q'-Q+1\}} ^{\mathrm{min}\{q, q'\}}
\left\{C_q^{q''} C_{Q-1-q}^{q'-q''} (-1)^{Q-1-q-q'+q''} \right\} , \
0 \le q, q' \le Q-1,
\end{equation}
\setcounter{equation}{\value{mytempeqncnt}}
\end{figure*}

\begin{figure*}[!b]
\vspace*{-10pt}
\normalsize
\setcounter{mytempeqncnt}{\value{equation}}
\setcounter{equation}{17} \setlength{\arraycolsep}{0.0em}
\begin{eqnarray}\label{psiL}
[\boldsymbol{\Phi}^H \boldsymbol{L}]_{q, q'} &{}={}& \sum \limits
_{q'' = 0 }^{[1-(Q)_2] Q/2 + (Q)_2 (Q-1)/2 -1} \left\{ [
\boldsymbol{\Phi}]_{q'',q}^* [\boldsymbol{L}]_{q'', q'} + [
\boldsymbol{\Phi}]_{q'',q} [\boldsymbol{L}]_{q'', q'}^* \right\}
+ (Q)_2 [\boldsymbol{\Phi}]_{(Q-1)/2,q} [\boldsymbol{L}]_{(Q-1)/2, q'} \nonumber \\
&{} = {}& 2 \Re \left\{ \sum \limits _{q'' = 0 }^{[1-(Q)_2] Q/2 +
(Q)_2 (Q-1)/2 -1} \left\{ [ \boldsymbol{\Phi}]_{q'',q}^*
[\boldsymbol{L}]_{q'', q'} \right\} \right\} + (Q)_2
[\boldsymbol{\Phi}]_{(Q-1)/2,q} [\boldsymbol{L}]_{(Q-1)/2, q'},
\nonumber \\
& & \hspace*{300pt}\forall q, q' \in \{ 0, 1, \cdots, Q-1 \}.
\end{eqnarray}\setlength{\arraycolsep}{5pt}\hspace*{-3pt}
\setcounter{equation}{\value{mytempeqncnt}}
\end{figure*}

\subsection{FCFO Estimation}
    The FCFO $\varepsilon _f$ is estimated based on
$\bar{\boldsymbol{y}}$. Assume $\hat{\varepsilon} _i  = \varepsilon
_i$ and substitute (\ref{3-01}) into (\ref{icfo6}). Then,  by
exploiting condition (C0) and condition (C1), which implies the
periodic property of the training sequences, $\bar{\boldsymbol{y}}$
can be expressed in an equivalent form as shown in (\ref{5-2}) at
the bottom of the page. In (\ref{5-2}),
\setlength{\arraycolsep}{0.0em}
\begin{eqnarray*}
\beta_\mu &{}={}& \varepsilon_f + \: i_\mu,\\
\boldsymbol{v} &{}={}& e^{ - j{2\pi \varepsilon _i N_g }/{N}}
[\boldsymbol{I}_{N_r} \otimes \boldsymbol{D}_N ( - \varepsilon _i )
] \boldsymbol{w}.
\end{eqnarray*}
\setlength{\arraycolsep}{5pt}\hspace*{-3pt} It follows from
(\ref{5-2}) that the estimation of $\varepsilon_f$ is equivalent to
the estimation of the $N_t$ different equivalent CFOs $\{\beta_\mu\}
_{\mu=0}^{N_t-1}$.

\addtocounter{equation}{1}

    Furthermore, exploiting the periodic property of the training
sequences again, we can stack $\bar{\boldsymbol{y}}$ into the $
Q\times N_r P$ matrix $\boldsymbol{Y} = [\boldsymbol{Y}_{0},
\boldsymbol{Y}_{1}, \cdots, \boldsymbol{Y}_{\nu}, \cdots
\boldsymbol{Y}_{N_r-1}]$, where
\[
[\boldsymbol{Y}_{\nu}]_{q,p} = [((\boldsymbol{e} _{N_r} ^{\nu}) ^T
\otimes \boldsymbol{I}_N) \bar{\boldsymbol{y}}]_{qP+p},\ 0 \le q <
Q, 0 \le p < P.
\]
Then, (\ref{5-2}) can be expressed in the following equivalent form
\begin{equation}\label{5-3}
\boldsymbol{Y} = \boldsymbol{B}\boldsymbol{X} + \boldsymbol{V},
\end{equation}
where \setlength{\arraycolsep}{0.0em}
\begin{eqnarray*}
\boldsymbol{B} &{}={}& [\boldsymbol{b}_0, \boldsymbol{b}_1, \cdots,
\boldsymbol{b}_{\mu}, \cdots, \boldsymbol{b}_{N_t -1}], \\
\boldsymbol{b}_{\mu} &{}={}& [1, e ^{j {2\pi \beta_\mu}/{Q} },
\cdots, e ^{j {2\pi \beta_\mu q}/{Q} }, \cdots, e ^{j {2\pi
\beta_\mu(Q-1)}/ {Q} }]^T, \\
\boldsymbol{X} &{}={}& [\boldsymbol{X}_{0}, \boldsymbol{X}_{1},
\cdots, \boldsymbol{X}_{\nu}, \cdots, \boldsymbol{X}_{N_r -1}], \\
\boldsymbol{X}_{\nu} &{}={}& [\boldsymbol{x}^{(\nu,0)},
\boldsymbol{x}^{(\nu,1)}, \cdots, \boldsymbol{x}^{(\nu,\mu)},
\cdots, \boldsymbol{x}^{(\nu, N_t-1)}]^T, \\
\boldsymbol{x}^{(\nu, \mu)} &{}={}&  \sqrt{P} e^{j{2\pi
\varepsilon_f N_g }/{N}} \boldsymbol{D}_{P} (\beta _\mu ) \\
&{} {}& \times \boldsymbol{F}_P^H \mathrm{diag}
\{\tilde{\boldsymbol{s}}_{\mu}\} \boldsymbol{\Theta} _{i_\mu}^T
\boldsymbol{F}_N [\boldsymbol{I}_L, \boldsymbol{0}_{L \times
(N-L)}]^T \boldsymbol{h} ^{(\nu,\mu)},
\end{eqnarray*}\setlength{\arraycolsep}{5pt}\hspace*{-3pt}
and $\boldsymbol{V}$ is
the $ Q\times N_r P$ matrix generated from $\boldsymbol{v}$ in the
same way as $\boldsymbol{Y}$. From (\ref{5-3}), we can see that
there exists an inherent relationship between FCFO estimation and
DOA estimation in ULA \cite{Petre} with $P$, $Q$ and $N_r$
satisfying $N_rP > Q$ (C4).

    The covariance matrix of $\boldsymbol{Y}$ can
be estimated by $\boldsymbol{\hat{R}} _{\boldsymbol{Y}
\boldsymbol{Y}} = \boldsymbol{Y} \boldsymbol{Y}^H / (N_r P)$. Let
$\boldsymbol{L}$ denote the $Q \times Q$ unitary column conjugate
symmetric matrix which is given by
\begin{multline*} \boldsymbol{L}
= \frac{[1- (Q)_2]}{{\sqrt 2 }}\left[ {\begin{array}{*{20}c}
   {\boldsymbol{I}_{Q/2} } & {j\boldsymbol{I}_{Q/2} }  \\
   {\boldsymbol{J}_{Q/2} } & { - j\boldsymbol{J}_{Q/2} }  \\
\end{array}} \right] \\
+ \frac{(Q)_2}{{\sqrt 2 }}\left[ {\begin{array}{*{20}c}
   {\boldsymbol{I}_{(Q-1)/2} } & { \boldsymbol{0}_{(Q-1)/2} } & {j\boldsymbol{I}_{(Q-1)/2} }  \\
   { \boldsymbol{0}_{(Q-1)/2}^T } & { \sqrt{2} } & { \boldsymbol{0}_{(Q-1)/2}^T} \\
   {\boldsymbol{J}_{(Q-1)/2} } & { { \boldsymbol{0}_{(Q-1)/2} } } & { - j\boldsymbol{J}_{(Q-1)/2} }  \\
\end{array}} \right].
\end{multline*}
Then,  with the aid of $\boldsymbol{L}$, the complex matrix
$\boldsymbol{\hat{R}} _{\boldsymbol{Y} \boldsymbol{Y}}$ can be
transformed into a real matrix \cite{Marius} as follows
\setlength{\arraycolsep}{0.0em}
\begin{eqnarray}
{\boldsymbol{\hat{R}}} _{\boldsymbol{Y} \boldsymbol{Y}}^r &{}={}&
{1}/{2} \cdot \boldsymbol{L}^H (\boldsymbol{\hat{R}}_{\boldsymbol{Y}
\boldsymbol{Y}} +\: \boldsymbol{J}_Q
\boldsymbol{\hat{R}}_{\boldsymbol{Y} \boldsymbol{Y}}^*
\boldsymbol{J}_Q)\boldsymbol{L} \nonumber \\
&{}={}& \Re ( \boldsymbol{L}^H \boldsymbol{\hat{R}} _{\boldsymbol{Y}
\boldsymbol{Y}} \boldsymbol{L}) .
\end{eqnarray}
\setlength{\arraycolsep}{5pt}\hspace*{-3pt} The eigen-decomposition
of the real matrix ${\boldsymbol{\hat{R}}} _{\boldsymbol{Y}
\boldsymbol{Y}}^r$ can be obtained as
\begin{equation}
 {\boldsymbol{\hat{R}}} _{\boldsymbol{Y} \boldsymbol{Y}}^r
 = \boldsymbol{E}_{\boldsymbol{X}} \boldsymbol{\Lambda} _{\boldsymbol{X}}
 \boldsymbol{E}_{\boldsymbol{X}}^H
 + \boldsymbol{E}_{\boldsymbol{V}} \boldsymbol{\Lambda} _{\boldsymbol{V}}
 \boldsymbol{E}_{\boldsymbol{V}}^H ,
\end{equation}
where \setlength{\arraycolsep}{0.0em}
\begin{eqnarray*}
\boldsymbol{\Lambda} _{\boldsymbol{X}}&{} ={} &
\mathrm{diag}\{[\lambda_0, \lambda_1, \cdots,
\lambda_{N_t-1}]^T\},\\
\boldsymbol{\Lambda} _{\boldsymbol{V}} &{}={}& \sigma _w^2
\boldsymbol{I} _{Q-N_t}, \\
\lambda_0 &{}\geq{}& \lambda_1\geq\cdots\geq \lambda_{N_t-1} >
\sigma _w^2,
\end{eqnarray*}
\setlength{\arraycolsep}{5pt}\hspace*{-3pt} and
$\boldsymbol{E}_{\boldsymbol{X}}$ and
$\boldsymbol{E}_{\boldsymbol{V}}$ contain the unitary eigen-vectors
that span the signal space and noise space, respectively. Let
\setlength{\arraycolsep}{0.0em}
\begin{eqnarray*}
z &{} ={} & e^{j 2 \pi \beta / Q},\\
\boldsymbol{a}(z ) &{}={}& [ 1, z, \cdots, z^{Q-1}]^T.
\end{eqnarray*}
\setlength{\arraycolsep}{5pt}\hspace*{-3pt} Then,  by exploiting
$\boldsymbol{a}(z )$, a polynomial of degree $2(Q-1)$ with complex
coefficients can be obtained \cite{YxjiangICC06} as follows
\begin{equation}\label{5-8}
f(z) = \boldsymbol{a}^T (z) \boldsymbol{J}_Q \boldsymbol{A}
\boldsymbol{a} (z),
\end{equation}
where
\[
\boldsymbol{A} =\boldsymbol{L} [ \boldsymbol{I}_Q -
\boldsymbol{E}_{\boldsymbol{X}} \boldsymbol{E}_{\boldsymbol{X}}^H
]\boldsymbol{L}^H.
\]
Correspondingly, $ \{ \beta _\mu  \} _{\mu  = 0}^{N_t  - 1} $ can be
indirectly estimated by calculating the pairwise roots of $f(z)=0$
which are closest to the unit circle.

    Note that by its definition, $z$ is always located on the unit
circle. By analysis, we find that the following geometric mapping
exists,
\begin{equation}
g(z) = \cot (\pi \beta/{Q} ) = j(z + 1)/(z - 1).
\end{equation}
It follows immediately that $g(z)$ is a monotonic reversible mapping
with real values for $\beta \in [-0.5,Q-0.5)$. Hence, we have $z(g)
= (g+j)/(g-j)$. Then, $\boldsymbol{a} (z)$ can be expressed as a
function of $g$ as
\begin{equation}\label{bz}
\boldsymbol{a} (z) = (g - j)^{1 - Q}\boldsymbol{d} (g),
\end{equation}
where
\[
\boldsymbol{d} (g) = [(g-j)^{Q-1}, (g+j)(g-j)^{Q-2},\cdots,
(g+j)^{Q-1}]^T.
\]
Since the elements of $\boldsymbol{d} (g)$ are polynomials of degree
$(Q-1)$ with respect to $g$, $\boldsymbol{d} (g)$ in (\ref{bz}) can
be further expressed as
\begin{equation}\label{dqg}
\boldsymbol{d} (g) = {\boldsymbol{\Phi}} \boldsymbol{a} (g),
\end{equation}
where ${\boldsymbol{\Phi}}$ is the $Q \times Q$ coefficient matrix.
By straight-forward calculation, the elements of $\boldsymbol{\Phi}$
are obtained as shown in (\ref{Phiqq}). \addtocounter{equation}{1}
In (\ref{Phiqq}), $C_q^{q''} = q!/[(q-q'')!{q''}!] $. Then, from
(\ref{Phiqq}), we immediately obtain
\begin{multline}
[\boldsymbol{\Phi}]_{q,q'} = [\boldsymbol{\Phi}]_{Q-1-q,q'}^* \ \& \
(Q)_2 \Im \bigl\{ [\boldsymbol{\Phi}]_{(Q-1)/2, q} \bigl\} = 0, \\
\forall q, q' \in \{ 0, 1, \cdots, Q-1 \},
\end{multline}
which shows that ${\boldsymbol{\Phi}}$ is a column conjugate
symmetric matrix \textsl{no matter} $Q$ is even or odd.

    By exploiting (\ref{bz}) and (\ref{dqg}) in (\ref{5-8}), the following
equivalent polynomial can be obtained (ignoring the constant items)
\begin{equation}\label{newfz}
{f}^r(g) = \boldsymbol{a} ^T (g) \boldsymbol{J}_Q \boldsymbol{A}^r
\boldsymbol{a} (g)  ,
\end{equation}
where
\[
\boldsymbol{A}^r = \boldsymbol{J}_Q \boldsymbol{\Phi}^H
\boldsymbol{L} [ \boldsymbol{I}_Q - \boldsymbol{E}_{\boldsymbol{X}}
\boldsymbol{E}_{\boldsymbol{X}}^H ]\boldsymbol{L}^H
{\boldsymbol{\Phi}}.
\]
Due to the column conjugate symmetric property of $\boldsymbol{L}$
and ${\boldsymbol{\Phi}}$, we have (\ref{psiL}) as shown at the
bottom of the page. \addtocounter{equation}{1} Hence,
$\boldsymbol{\Phi}^H \boldsymbol{L}$ becomes a real matrix
\textsl{no matter} $Q$ is even or odd. Moreover,
$\boldsymbol{E}_{\boldsymbol{X}}$ is also a real matrix.
Accordingly, ${f}^r(g)$ is transformed into a polynomial with real
coefficients. The roots of ${f}^r(g) = 0$ can be obtained by the
fast root-calculating algorithms for real polynomials in
\cite{Press}, whose computational complexities are much less than
those for complex ones.

    After the roots of ${f}^r(g) = 0$ are obtained, the FCFO
can be readily estimated according to the following steps.
\begin{enumerate}
\item Find the $N_t$ pairwise roots of ${f}^r(g) = 0$ whose imaginary
parts have the smallest absolute values, $ \{ \Re(g_{\mu}) \pm j \Im
(g_{\mu}) \}_{\mu=0}^{N_t-1}$.

\item Calculate the equivalent CFOs corresponding to the $N_t$
roots,
\begin{equation}
\hat{\beta} _{\mu}  = \left({Q}/{\pi } \times
\mathrm{acot}[\Re(g_{\mu}) ]\right) _Q.
\end{equation}

\item Calculate the FCFO $\hat{\varepsilon} _{\mu}^f$ corresponding to
each $\hat{\beta}_{\mu}$,
\begin{equation}
\hat{\varepsilon} _{\mu}^f = \left\{ {\begin{array}{*{3}l}
   {\hat{\beta} _{\mu}  - i_0,} & {\hat{\beta} _{\mu} \in ((i_0 - \epsilon _{\mathrm{th}})_Q ,  i_0 + \epsilon _{\mathrm{th}})}  \\
   {\hat{\beta} _{\mu}  - i_1 ,} & { \hat{\beta} _{\mu} \in (i_1 - \epsilon _{\mathrm{th}} , i_1 + \epsilon _{\mathrm{th}})}  \\
    \cdots  &  \cdots   \\
   {\hat{\beta} _{\mu}  - i_{N_t-1},} & {\hat{\beta} _{\mu} \in (i_{N_t-1} - \epsilon
   _{\mathrm{th}}, } \\
   {} & {\hspace*{40pt} (i_{N_t-1} + \epsilon _{\mathrm{th}})_Q)}  \\
\end{array}} \right.,
\end{equation}
where $\epsilon _{\mathrm{th}}$ denotes a threshold which is
predefined to avoid the ambiguous estimation and $0.5 < \epsilon
_{\mathrm{th}} < 1$, $[a,b)$ has the usual meaning for $a<b$, and
$[a,b) = [0,b) \cup [a, Q)$ for $a>b $.

\item Obtain the final FCFO by averaging $\{\hat{\varepsilon}
_{\mu}^f \} _{{\mu}=0}^{N_t-1}$,
\begin{equation}
\hat{\varepsilon} _f  = \frac{1}{N_t} \sum\limits_{{\mu} = 0}^{N_t -
1} {\hat{\varepsilon} _{\mu}^f }.
\end{equation}

\end{enumerate}

\subsection{Computational Complexity}

    The computational load of our proposed CFO estimator mainly
involves the $N$ point FFT, the eigen-decomposition of
${\boldsymbol{\hat{R}}} _{\boldsymbol{Y} \boldsymbol{Y}}^r$ and the
root-calculation for ${f}^r(g) = 0$, which require $4N \log_2 N$, $9
Q^3$ and ${64}/{3} \cdot (Q-1)^3$ real additions or multiplications
\cite{Press} \cite{Golub}, respectively. Compared with the direct
CFO estimate from (\ref{aequ1}), which requires a large point DFT
operation and an exhaustive line search, our proposed estimator has
significantly lower complexity especially with a relatively small
$Q$. Furthermore, by applying the approach in \cite{FeifeiGao}, the
complexity of our proposed estimator can be further decreased by
calculating the roots from the first-order derivative of ${f}^r(g) =
0$, but at the cost of a slight performance degradation at low
signal-to-noise ratio (SNR).

\section{Training Sequence Design}

    With our design conditions in the previous section, the training
sequences are determined completely by the base sequences $\{
\boldsymbol{\tilde{s}}_{\mu} \}_{\mu =0} ^{N_t -1}$ with fixed $P$,
$Q$ and $\{i_{\mu} \} _{\mu =0} ^{N_t-1}$. Let
$\boldsymbol{R}_{\boldsymbol{X} \boldsymbol{X}}$ denote the
covariance matrix of $\boldsymbol{X}$. It is pointed out in
\cite{Petre} \cite{Marius} that the optimal performance can be
achieved with uncorrelated signals, i.e., diagonal matrix
$\boldsymbol{R} _{\boldsymbol{X} \boldsymbol{X}}$. Since
$\boldsymbol{R} _{\boldsymbol{X} \boldsymbol{X}}$ can be estimated
by $\boldsymbol{\hat{R}} _{\boldsymbol{X} \boldsymbol{X}} =
\boldsymbol{X} \boldsymbol{X}^H / (N_r P)$, it is expected that good
performance can be achieved with diagonal matrix
$\boldsymbol{\hat{R}} _{\boldsymbol{X} \boldsymbol{X}}$. Making
$\boldsymbol{\hat{R}} _{\boldsymbol{X} \boldsymbol{X}}$ to be a
diagonal matrix is equivalent to making $[\boldsymbol{\hat{R}}
_{\boldsymbol{X} \boldsymbol{X}}]_{\mu, \mu '} \bigl | _{\mu \neq
\mu'} = 0 $. Define \setlength{\arraycolsep}{0.0em}
\begin{eqnarray*}
\varpi_{\mu, \mu '} &{} ={} & ( i_{\mu }-i_{\mu
'})/Q, \\
\boldsymbol{s}_{\mu} &{}={}& \frac{1}{\sqrt{Q}} \boldsymbol{F}_P^H
\boldsymbol{\tilde{s}}_{\mu}.
\end{eqnarray*}
\setlength{\arraycolsep}{5pt}\hspace*{-5pt} Assume that the channel
taps remain constant during the training period. Then,
$[\boldsymbol{\hat{R}} _{\boldsymbol{X} \boldsymbol{X}}]_{\mu, \mu
'}$ can be expressed as
\begin{multline}\label{equ30}
[\boldsymbol{\hat{R}} _{\boldsymbol{X} \boldsymbol{X}}]_{\mu, \mu '}
=  \\
\frac{Q}{N_r} \sum\limits_{\nu =0}^{N_r -1}\sum\limits_{l
=0}^{L-1} \sum\limits_{l' =0}^{L-1} \left\{[\boldsymbol{h}^{(\nu,
\mu)}]_l [\boldsymbol{T}^{(\mu, \mu ')}]_{l, l'}
[\boldsymbol{h}^{(\nu, \mu ')}]_{l'}^* \right\},
\end{multline}
where
\[
[\boldsymbol{T}^{(\mu, \mu ')}]_{l, l'} = (\boldsymbol{s}_{\mu}
^{(l)})^T \boldsymbol{D}_P(\varpi_{\mu, \mu '} Q)
(\boldsymbol{s}_{\mu'} ^{(l')})^*.
\]
Since the elements of $\boldsymbol{h} ^{(\nu, \mu)}$ are random
variables, $[\boldsymbol{\hat{R}} _{\boldsymbol{X}
\boldsymbol{X}}]_{\mu, \mu '}\bigl | _{\mu \neq \mu'} = 0 $ can be
achieved by making the training sequences satisfy
\begin{equation}\label{th2cond}
[\boldsymbol{T}^{(\mu, \mu ')}]_{l, l'} = 0, \ \mathrm{if} \ \mu \neq
\mu ' \ \& \  0 \le l, l'  \le L-1.
\end{equation}

    It follows immediately that the optimal training
sequences should satisfy the condition in (\ref{th2cond}). However,
with condition (C0), $\varpi_{\mu, \mu '}$ is definitely a decimal
fraction, which greatly complicates the satisfaction of the
condition in (\ref{th2cond}) with proper training sequences. To ease
the above problem, we construct two types of sub-optimal training
sequences which can make the off-diagonal elements of
$\boldsymbol{\hat{R}} _{\boldsymbol{X} \boldsymbol{X}}$ as small as
possible. It has been proven in \cite{Minn_Periodic} that training
sequence with the zero auto-correlation (ZAC) property is optimal
for CFO estimation in SISO frequency selective fading channels.
Besides, constant amplitude ZAC (CAZAC) sequence (e.g., \cite{Chu,
WHMow_SSTA} and references therein) is often a preferred choice for
training. Therefore, we propose to construct the training sequences
from a length-$P$ Chu sequence $\boldsymbol{s}$ with its element
given by
\[
[\boldsymbol{s}]_p = e^{j\pi v p^2 / P},\ 0 \le p \le P-1,
\]
where $v$ is coprime to $P$, and $P$ is supposed to be even.

    Let $\boldsymbol{\tilde{s}}_{\mu } = \sqrt{Q/N_t} \boldsymbol{F}_P
\boldsymbol{s}^{(\mu M)}$ with $M = \lfloor P/N_I \rfloor$ and $N_I
\ge N_t$. Then,  we refer to the so-constructed training sequences
as TS 0. Let $\boldsymbol{\tilde{s}}_{\mu } = \sqrt{Q/N_t}
\boldsymbol{F}_P \boldsymbol{s}$. Then,  we refer to the
so-constructed training sequences as TS 1. Note that the TS 1
training sequences are equivalent to the so-called repeated
phase-rotated Chu (RPC) sequences \cite{Coon}. Define
\[
p_{\mu, l} = (1-m)\mu M + l,
\]
where $m=0$ for TS 0, and $m=1$ for TS 1. Then,  from the above
constructions we have \setlength{\arraycolsep}{0.0em}
\begin{eqnarray}\label{Tuull}
[\boldsymbol{T}^{(\mu, \mu ')}]_{l, l'} &{} ={} & \frac{1}{N_t}
(-1)^{v(p_{\mu , l} - p_{\mu ', l'}) +1} e^{j \pi v (p_{\mu, l} ^2 -
p_{\mu ', l'}^2) / P} \nonumber\\
& & \times e^{-j \pi (P-1)[v(p_{\mu , l} - p_{\mu ', l'}) -
\varpi_{\mu, \mu '}]/P} \sin ( \pi \varpi_{\mu, \mu '} ) \nonumber \\
& &  / \sin \{ \pi [v(p_{\mu, l} - p_{\mu ', l'}) - \varpi_{\mu, \mu
'}]/P \}.
\end{eqnarray}
\setlength{\arraycolsep}{5pt}\hspace*{-3pt} It follows immediately
from (\ref{Tuull}) that
\begin{multline}\label{equ36}
\bigl |[\boldsymbol{T}^{(\mu, \mu ')}]_{l, l'}\bigl | _{p_{\mu, l}-
p_{\mu ', l'} \ne 0} \\
\ll  \bigl |[\boldsymbol{T}^{(\mu, \mu
')}]_{l, l'}\bigl |_{p_{\mu, l}- p_{\mu ', l'} = 0} < {P}/{N_t}.
\end{multline}
We see that $\bigl |[\boldsymbol{T}^{(\mu, \mu ')}]_{l, l'}\bigl |$
achieves its maximum for TS 0 when $(\mu-\mu') M = l' -l$ and for TS
1 when $l = l'$. Assume that the channel energy is mainly
concentrated in the preceding $M$ channel taps and the first channel
tap is the dominant one. Then, it can be inferred from (\ref{equ30})
and (\ref{equ36}) that the value of $\bigl |[\boldsymbol{\hat{R}}
_{\boldsymbol{X} \boldsymbol{X}}]_{\mu, \mu '}\bigl |_{\mu \ne \mu
'}$ for TS 0 is very small, and it is much smaller than that for TS
1 in the same channel environment. In this sense, we can say that TS
0 is superior to TS 1, which will be verified through simulation
results in the following section.

    Note that our proposed training sequence structure is similar
to the one introduced recently in \cite{Ghogho}, and the
identifiability of our CFO estimator, however, cannot be guaranteed
with the training sequences in \cite{Ghogho}.

\section{Simulation Results}

    To evaluate our CFO estimator's performance with the proposed
training sequences for MIMO-OFDM, a number of simulations are
carried out. Throughout the simulations, a MIMO OFDM system of
bandwidth $20$MHz operating at 5GHz with $N=1024$ and $N_g = 64$ is
used. Each channel has $4$ independent Rayleigh fading taps, whose
relative average-powers and propagation delays are $\{0, -9.7,
-19.2, -22.8\}$dB and $\{0, 0.1, 0.2, 0.4\}\mu $s, respectively. For
the training sequences, we set $P$ and $Q$ to $64$ and $16$,
{respectively.} The normalized CFO $\varepsilon$ is generated within
the range $(-\lfloor Q/2 \rfloor, Q- \lfloor Q/2 \rfloor]$. For
description convenience, we henceforth refer to the proposed real
polynomial based CFO estimator as RPBE.

    In the following, we use the average CRB (avCRB), which corresponds
to the extended Miller and Chang bound (EMCB) \cite{Minn_Periodic}
\cite{Gini}, to benchmark the performance of RPBE. The average CRB
or EMCB is obtained by simply averaging the snapshot CRB over
independent channel realizations, which can be calculated
straight-forwardly \cite{Kay} as follows
\begin{multline}
\mathrm{CRB}_\varepsilon  =  \\
\frac{N\sigma_w^2} {8 \pi^2 \boldsymbol{h}^H
\boldsymbol{\mathcal{X}}^H \boldsymbol{\mathcal{B}}
[\boldsymbol{I}_{N_rN} - \boldsymbol{\mathcal{X}}
(\boldsymbol{\mathcal{X}}^H \boldsymbol{\mathcal{X}})^{-1}
\boldsymbol{\mathcal{X}}^H] \boldsymbol{\mathcal{B}}
\boldsymbol{\mathcal{X}} \boldsymbol{h}},
\end{multline}
where \setlength{\arraycolsep}{0.0em}
\begin{eqnarray*}
\boldsymbol{\mathcal{X}}  & {} ={} & \boldsymbol{I}_{N_r} \otimes \boldsymbol{S}, \\
\boldsymbol{\mathcal{B}} &{} = {} & \boldsymbol{I}_{N_r} \otimes
\mathrm{diag}\{[N_g, N_g+1, \cdots, N_g +N -1]^T \}.
\end{eqnarray*}\setlength{\arraycolsep}{5pt}\hspace*{-3pt}

\begin{figure}[!b]
\centering
\includegraphics[width=0.48\textwidth]{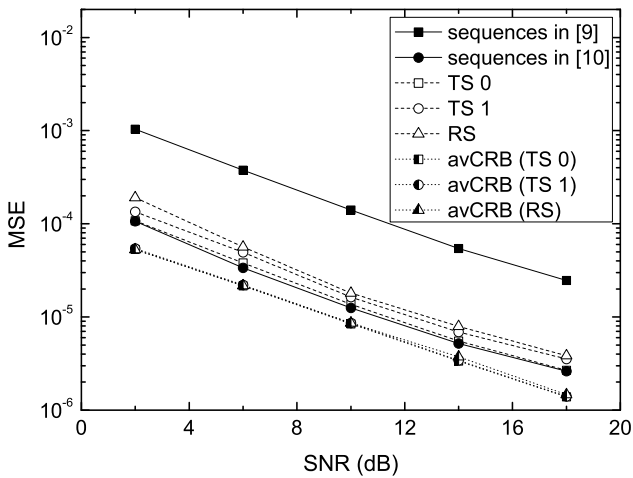}
\captionstyle{mycaptionstyle} \caption{CFO estimation performance
for different training sequences with $N_t = 3$ and $N_r = 2$.}
\label{Jiang_Paper-TW-Oct-06-0846_fig1}
\end{figure}

    In order to verify our analysis concerning training
sequence design with the Chu sequence, we construct the random
sequences (RS) by generating the $N_tP$ pilots randomly, and compare
the MSE performances of RPBE with TS 0, TS 1 and RS. Also included
for comparison are the performances of the CFO estimators with their
training sequences in \cite{Mody} and \cite{Zelst}. In Fig.
\ref{Jiang_Paper-TW-Oct-06-0846_fig1}, we present the corresponding
simulation results with $N_t =3$ and $N_r =2$. It can be observed
that the performance of RPBE with TS 0 is slightly better than that
with TS 1, which coincides with the analytical results concerning
training sequence design in the previous section. It can also be
observed that the performances of RPBE with TS 0 and TS 1 are better
than that with RS, which should be attributed to the good
correlation property of TS 0 and TS 1. Actually, the performance
improvements are more evident when the frequency selective fading
channels with large delay spreads are considered. Furthermore, we
observe that the performances of RPBE with TS 0 and TS 1 are far
better than that of the CFO estimator in \cite{Mody}, and almost the
same as that of the CFO estimator in \cite{Zelst}. Noting that the
overhead of the training sequences presented in \cite{Zelst} grows
linearly with the number of transmit antennas, we can find certain
advantages of the proposed training sequences. We also observe that
the average CRB for TS 0 is slightly smaller than that for TS 1.
Since the average CRBs for TS 0 and TS 1 are very close, only one
curve is plotted subsequently.

\begin{figure}[!b]
\centering
\includegraphics[width=0.48\textwidth]{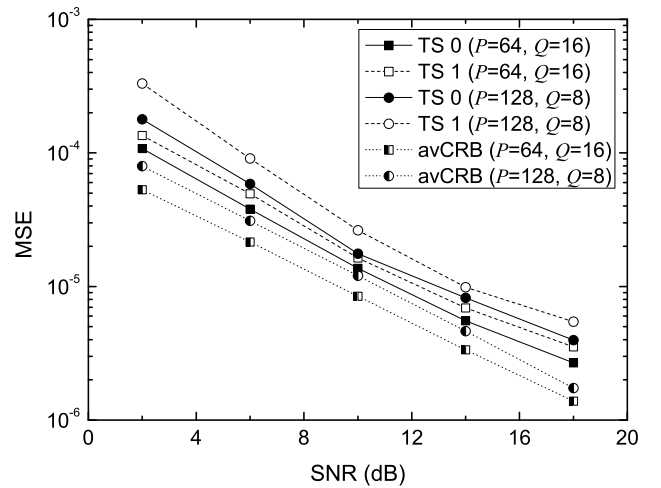}
\captionstyle{mycaptionstyle} \caption{{CFO estimation performance
for different $Q$ with $N_t = 3$ and $N_r = 2$.}}
\label{Jiang_Paper-TW-Oct-06-0846_fig2}
\end{figure}

    Depicted in Fig. \ref{Jiang_Paper-TW-Oct-06-0846_fig2} is the MSE performance of RPBE
as a function of SNR for different $Q$ {with $N_t =3$ and $N_r =2$.}
Studying the curves in Fig. \ref{Jiang_Paper-TW-Oct-06-0846_fig2},
we can see that the performance of RPBE with the same training
sequences degrades with smaller $Q$ provided that $Q > N_t $ and $P
\ge L $, which agrees well with the analysis in \cite{Petre}. We can
also see that a smaller $Q$ yields a larger average CRB, and the MSE
performance follows the same trend as the average CRB.

   In Fig. \ref{Jiang_Paper-TW-Oct-06-0846_fig3}, we illustrate how the number of
transmit antennas $N_t$ affects the performance of RPBE. It can be
observed that the MSE performance for TS 1 deteriorates with
increased $N_t$ due to the influence of MAI, whereas the MSE
performance for TS 0 degrades slightly, which should be ascribed to
the better correlation property of TS 0. Another observation is that
the number of transmit antennas has little impact on the average
CRB.

    Fig. \ref{Jiang_Paper-TW-Oct-06-0846_fig4} shows how the number of receive antennas
$N_r$ affects the performance of RPBE. We can observe that the MSE
performance of RPBE is substantially improved for both TS 0 and TS 1
by increasing $N_r$. We have the similar observation for the average
CRB. These observations imply that it is more efficient for
performance improvement of RPBE to increase the number of receive
antennas than the number of transmit antennas.

\begin{figure}[!t]
\centering
\includegraphics[width=0.48\textwidth]{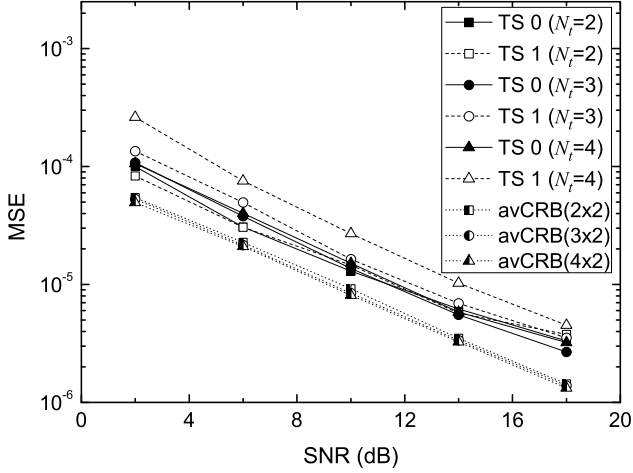}
\captionstyle{mycaptionstyle} \caption{{CFO estimation performance
for different numbers of transmit antennas with $N_r =2$.}}
\label{Jiang_Paper-TW-Oct-06-0846_fig3}
\end{figure}

\begin{figure}[!t]
\centering
\includegraphics[width=0.48\textwidth]{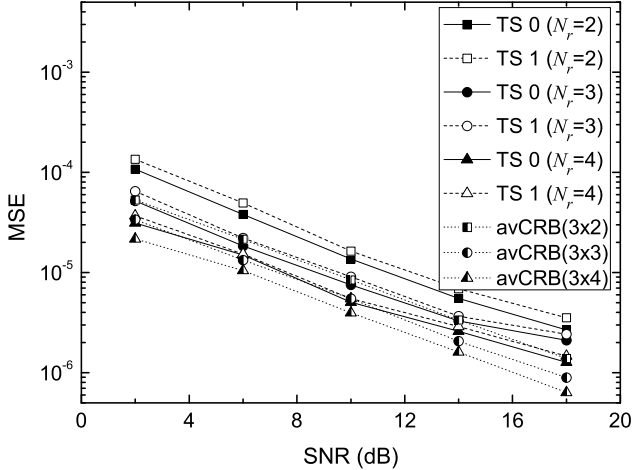}
\captionstyle{mycaptionstyle} \caption{{CFO estimation performance
for different numbers of receive antennas with $N_t =3$.}}
\label{Jiang_Paper-TW-Oct-06-0846_fig4}
\end{figure}

\section{Conclusions}

    In this paper, we have presented a training sequence assisted
CFO estimator for MIMO OFDM systems by exploiting the properties of
the training sequences and an efficient geometric mapping. We have
developed the required conditions for the training sequences to
yield estimation identifiability, and proposed sub-optimal training
sequences constructed from the Chu sequence. Our proposed estimator
and training sequences yield better or similar estimation
performance with much smaller training overhead than existing
methods for MIMO-OFDM systems.

\begin{figure*}[!b]
\hrulefill
\normalsize
\setcounter{mytempeqncnt}{\value{equation}}
\setcounter{equation}{33}
\begin{multline}\label{equ34}
\prod \limits_{\mu=1}^{N_t-1} \prod \limits_{\mu'=0}^{\mu-1} \bigl
\{( b_{\mu} - b_{\mu'} )( a_{\mu'} - a_{\mu} )\bigl \} = \prod
\limits_{\mu=2}^{N_t-1} \prod \limits_{\mu'=1}^{\mu-1} ( b_{\mu} -
b_{\mu'} ) \\
\times \sum \limits_{\mu '' = 0}^{N_t -1} \biggl \{ (-1)^{\mu ''}
\cdot \prod \limits_{\mu=1, \mu \ne \mu ''}^{N_t-1} \prod
\limits_{\mu'=0, \mu' \ne \mu ''}^{\mu-1} ( a_{\mu'} - a_{\mu} )
\cdot \prod \limits_{\mu=1}^{N_t-1} ( a_{\mu ''} - b_{\mu} ) \cdot
\prod \limits_{\mu=0, \mu \ne \mu ''}^{N_t-1} ( a_{\mu} - b_{0} )
\biggl \}.
\end{multline}
\setcounter{equation}{\value{mytempeqncnt}}
\end{figure*}

\appendices
\section{}

    This appendix presents the proof of Theorem 1. Substitute
(\ref{3-01}) into the cost function in (\ref{aequ1}) and ignore the
noise item. Define $\Delta = \varepsilon -\tilde {\varepsilon} $.
Then, the cost function in (\ref{aequ1}) can be expressed as
\begin{multline}
\mathfrak{P}(\Delta) = \\
N \boldsymbol{h} ^H \{\boldsymbol{I}_{N_r}
\otimes [ \boldsymbol{S}^H \boldsymbol{D}_N (-\Delta)
\boldsymbol{\bar{F}}^H \boldsymbol{\bar{F}} \boldsymbol{D}_N
(\Delta) \boldsymbol{S} ] \} \boldsymbol{h}.
\end{multline}
Let
\[
\boldsymbol{\breve{S}} = \mathrm{diag} \{
[\tilde{\boldsymbol{s}}_{0}^T, \tilde{\boldsymbol{s}}_{1}^T, \cdots,
\tilde{\boldsymbol{s}} _{\mu}^T, \cdots, \tilde{\boldsymbol{s}}
_{N_t-1}^T] ^T\}.
\]
Define
\[
\mathfrak{D}(\Delta) = N \boldsymbol{h}^H [\boldsymbol{I}_{N_r}
\otimes ( \boldsymbol{\breve{F}}^H \boldsymbol{\breve{S}}^H
\boldsymbol{\breve{S}} \boldsymbol{\breve{F}} )] \boldsymbol{h}
-\mathfrak{P}(\Delta).
\]
Then, it follows immediately that the maximum of
$\mathfrak{P}(\Delta)$ corresponds to the minimum of
$\mathfrak{D}(\Delta)$ with respect to $\Delta$. Suppose
\begin{multline*}
0 \le \omega < Q-N_t \ \& \ 0 \le z_{\omega} < Q \ \& \ z_{\omega}
\ne i_{\mu}  \ \& \\
z_{\omega} = z_{\omega '}, \ \mathrm{iff.} \
\omega = \omega '.
\end{multline*}
Then,  $\mathfrak{D}(\Delta)$ can be expressed as
\begin{equation}
\mathfrak{D}(\Delta) = \boldsymbol{d}^H (\Delta)
\boldsymbol{d}(\Delta),
\end{equation}
where \setlength{\arraycolsep}{0.0em}
\begin{eqnarray*}
\boldsymbol{d}(\Delta)  & {} ={} & [\boldsymbol{d}_{0}^T(\Delta),
\boldsymbol{d}_{1}^T(\Delta), \cdots, \boldsymbol{d}
_{\nu}^T(\Delta), \cdots, \boldsymbol{d} _{N_r-1}^T (\Delta)] ^T, \\
\boldsymbol{d} _{\nu}(\Delta) &{} = {} & \boldsymbol{\mathcal{C}}
\boldsymbol{\breve{S}} \boldsymbol{\breve{F}} \boldsymbol{{h}}
_{\nu}, \\
\boldsymbol{\mathcal{C}} & {} ={} & \left[ {\begin{array}{*{20}l}
   {\boldsymbol{\mathcal{C}}^{(0,0)} } &  \cdots  & {\boldsymbol{\mathcal{C}}^{(0,N_t  - 1)} }  \\
    \vdots  &  \boldsymbol{\mathcal{C}}^{(\omega, \mu)}  &  \vdots   \\
   {\boldsymbol{\mathcal{C}}^{(Q - N_t  - 1,0)} } &  \cdots  & {\boldsymbol{\mathcal{C}}^{(Q - N_t  - 1,N_t  - 1)} }  \\
 \end{array} } \right], \\
\boldsymbol{\mathcal{C}}^{(\omega, \mu)} & {} ={} & {\sqrt{N}}
\boldsymbol{\Theta}_{z_\omega}^T \boldsymbol{F} _N \boldsymbol{D}_N
(\Delta) \boldsymbol{F} _N^H \boldsymbol{\Theta}_{i_\mu}.
\end{eqnarray*}\setlength{\arraycolsep}{5pt}\hspace*{-3pt}
Correspondingly, we have
\begin{multline}
\mathfrak{D}(\Delta) \geq 0 \ \& \ \mathfrak{D}(0) = 0 \ \& \\
\mathfrak{D}(\Delta) = 0, \ \mathrm{iff.} \ \boldsymbol{d}(\Delta) =
\boldsymbol{0}_{N_r(N-N_tP)}.
\end{multline}
Therefore, for \textsl{any} $\boldsymbol{h} ^{(\nu,\mu)}(\ne
\boldsymbol{0}_L)$, proving that $\tilde \varepsilon = \varepsilon$
is the unique value to maximize $\mathfrak{P}(\Delta)$ with $\tilde
\varepsilon, \varepsilon \in ( -\lfloor Q/2\rfloor ,Q-\lfloor Q/2
\rfloor]$ is equivalent to proving that $\boldsymbol{d}(\Delta) \ne
\boldsymbol{0}_{N_r(N-N_tP)}$ for \textsl{any} $\Delta \in (-Q, 0)
\cup (0, Q)$.

    Consider the following two cases:

    1) When $\Delta$ is not an integer:  We establish immediately
from the definition of $\boldsymbol{\mathcal{C}} ^{(\omega, \mu)}$
that it is a column-wise circulant matrix. Denote the square matrix
formed by the first $N_t P$ rows of $\boldsymbol{\mathcal{C}}$ by
$\boldsymbol{\mathcal{C}} _{\mathcal{P}}$ (recall that $(N-N_tP) \ge
N_tP$ in our design condition). Then, by exploiting the relation
between circulant matrix and DFT matrix \cite{Golub},
$\boldsymbol{\mathcal{C}} _{\mathcal{P}}$ can be decomposed as
\begin{equation}\label{app35}
\boldsymbol{\mathcal{C}} _{\mathcal{P}} = \sqrt{P} (\boldsymbol{I}
_{N_t} \otimes \boldsymbol{F} _{P}^H) \boldsymbol{\Lambda}
(\boldsymbol{I} _{N_t} \otimes \boldsymbol{F} _{P}),
\end{equation}
where \setlength{\arraycolsep}{0.0em}
\begin{eqnarray*}
\boldsymbol{\Lambda} &{} = {} & \left[ {\begin{array}{*{20}l}
{\boldsymbol{\Lambda}^{(0,0)} } &  \cdots  & {\boldsymbol{\Lambda}^{(0,N_t  - 1)} }  \\
\vdots  &  \boldsymbol{\Lambda}^{(\mu, \mu')}  &  \vdots   \\
{\boldsymbol{\Lambda}^{(N_t-1,0)} } & \cdots & {\boldsymbol{\Lambda}^{(N_t-1,N_t-1)} }  \\
\end{array} } \right], \\
\boldsymbol{\Lambda}^{(\mu, \mu')} &{} = {}& \mathrm{diag}\{
\boldsymbol{\tilde{c}}^{(\mu, \mu')} \}, \\
\boldsymbol{\tilde{c}}^{(\mu, \mu')} & {} = {} & \boldsymbol{F} _{P}
\boldsymbol{\mathcal{C}}^{(\mu, \mu')} \boldsymbol{e} _{P}^0.
\end{eqnarray*}\setlength{\arraycolsep}{5pt}\hspace*{-3pt}
The element of $\boldsymbol{\tilde{c}}^{(\mu, \mu')}$ can be
obtained from its definition as follows
\setlength{\arraycolsep}{0.0em}
\begin{eqnarray}\label{lambda}
[\boldsymbol{\tilde{c}}^{(\mu, \mu')}]_p &{} = {} &
\frac{1}{\sqrt{Q}} c^{(\mu, \mu')} {(1- e^{j 2 \pi \Delta})} e^{j 2
\pi (i_{\mu'} -
z_{\mu} + \Delta) (-p)_P /N} \nonumber\\
&{} \ne {}& 0, \hspace*{90pt} 0 \le p \le \ P-1,
\end{eqnarray} \setlength{\arraycolsep}{5pt}\hspace*{-3pt}
where \setlength{\arraycolsep}{0.0em}
\begin{eqnarray*}\label{lambda}
c^{(\mu, \mu')} &{} = {} & {a_{\mu}}/({a_{\mu} - b_{\mu'}}), \\
a_{\mu} &{} = {}& e^{j2 \pi z_{\mu} /Q}, \\
b_{\mu'} &{} = {}& e^{j2 \pi (i_{\mu'} +\Delta) /Q}.
\end{eqnarray*} \setlength{\arraycolsep}{5pt}\hspace*{-3pt}
It follows from (\ref{lambda}) that $ \mathrm{rank}
\{\boldsymbol{\Lambda}^{(\mu, \mu')} \} = P$. Let
\[
\boldsymbol{\Lambda}_{p} = [\boldsymbol{I}_{N_t} \otimes (
\boldsymbol{e}_P^p )^T ] \boldsymbol{\Lambda} [\boldsymbol{I}_{N_t}
\otimes ( \boldsymbol{e}_P^p )^T ]^T,
\]
which denotes the $N_t \times N_t$ sub-matrix of
$\boldsymbol{\Lambda} $. Then, we can decompose
$\boldsymbol{\Lambda}_{p}$ as follows
\begin{equation}\label{theor26}
\boldsymbol{\Lambda}_{p} = \frac{1}{\sqrt{Q}} (1- e^{j 2 \pi
\Delta}) \boldsymbol{\Lambda} _{p}^a \boldsymbol{C}_{N_t}
\boldsymbol{\Lambda} _{p}^b,
\end{equation}
where \setlength{\arraycolsep}{0.0em}
\begin{eqnarray*}
\boldsymbol{\Lambda}_{p}^a &{} = {} &
\mathrm{diag}\{a_0^{-(-p)_P/P}, \cdots, a_{N_t-1}^{-(-p)_P/P} \}, \\
\boldsymbol{\Lambda}_{p}^b &{} = {}& \mathrm{diag}\{ b_0^{(-p)_P/P},
\cdots, b_{N_t-1}^{(-p)_P/P} \},
\end{eqnarray*} \setlength{\arraycolsep}{5pt}\hspace*{-3pt}
and $\boldsymbol{C}_{N_t}$ is the $N_t \times N_t$ square matrix
with its element given by $[\boldsymbol{C}_{N_t}] _{\mu, \mu'} = c
^{(\mu, \mu')}$. From the definitions of $a_{\mu}$ and $b_{\mu'}$,
we have
\begin{multline}
a_{\mu} \ne 0 \ \& \ a_{\mu} \ne b_{\mu} \ \& \\
a_{\mu} \ne a_{\mu'}, \ b_{\mu} \ne b_{\mu'}, \ a_{\mu} \ne
b_{\mu'}, \ \forall \mu \ne \mu'.
\end{multline}
Moreover, we also have (\ref{equ34}) as shown at the bottom of the
page. \addtocounter{equation}{1} Then, with the assumption that
$N_t$ and $Q$ are not very large (for example $N_t \le 8, Q =16$),
the determinant of $\boldsymbol{C}_{N_t}$ can be obtained with its
definition as follows \setlength{\arraycolsep}{0.0em}
\begin{eqnarray}\label{app-01}
\hspace*{-8pt} \mathrm{det} \{\boldsymbol{C}_{N_t}\} &{} = {} &
\prod \limits_{\mu =1} ^{N_t-1} { \prod \limits_{\mu'=0}^{\mu-1}
\prod \limits_{\mu''=0}^{N_t-1} { \frac{(b_\mu-b_{\mu'}) (a_{\mu'} -
a_\mu) a_{\mu''} }{
(a_\mu-b_{\mu'}) (a_{\mu'}-b_\mu) (a_{\mu''}-b_{\mu''})} } } \nonumber \\
&{} \ne {}& 0,
\end{eqnarray} \setlength{\arraycolsep}{5pt}\hspace*{-3pt}
which shows that $ \mathrm{rank} \{ \boldsymbol{C}_{N_t} \} = N_t$.
From (\ref{theor26}), we immediately obtain that $\mathrm{rank} \{
\boldsymbol{\Lambda}_{p} \} = N_t$. With the special diagonal
structure of $\boldsymbol{\Lambda} $, we establish that $
\mathrm{rank} \{ \boldsymbol{\Lambda} \} = N_tP$ and then $
\mathrm{rank} \{ \boldsymbol{\mathcal{C}} \} = N_tP$. Exploiting the
following relationship
\begin{multline}
\mathrm{rank} \{ \boldsymbol{A}_0 \} + \mathrm{rank} \{
\boldsymbol{A}_1 \} - n \le \mathrm{rank} \{ \boldsymbol{A}_0
\boldsymbol{A}_1 \} \\
\le  \mathrm{min} \{ \mathrm{rank} \{
\boldsymbol{A}_0 \}, \mathrm{rank} \{ \boldsymbol{A}_1 \}\},
\end{multline}
where $\boldsymbol{A}_0$ has $n$ columns and $\boldsymbol{A}_1$ has
$n$ rows, we further establish that $\mathrm{rank} \{
\boldsymbol{\mathcal{C}} \boldsymbol{\breve{S}}
\boldsymbol{\breve{F}} \} = N_t L$ for \textsl{any} non-integer
$\Delta$ (recall that $P \ge L $ in our design condition). Hence,
for \textsl{any} $\boldsymbol{h} ^{(\nu,\mu)} (\ne
\boldsymbol{0}_L)$, we immediately obtain from its definition that
$\boldsymbol{d} _{\nu} (\Delta) \ne \boldsymbol{0}_{N-N_tP}$ and
then $\boldsymbol{d}(\Delta) \ne \boldsymbol{0}_{N_r(N-N_tP)}$.

    2) When $\Delta$ is an integer: By exploiting the structure of
$\boldsymbol{\mathcal{C}}$, $\boldsymbol{d}_{\nu}(\Delta)$ can be
transformed into the following equivalent form
\begin{multline}
\boldsymbol{d}_{\nu}(\Delta) = [(\boldsymbol{d}^{(\nu, 0)}
(\Delta))^T, (\boldsymbol{d}^{(\nu, 1)} (\Delta))^T, \\ \cdots,
(\boldsymbol{d}^{(\nu, \omega)} (\Delta))^T, \cdots,
(\boldsymbol{d}^{(\nu, Q-N_t -1)} (\Delta))^T]^T,
\end{multline}
where
\begin{multline*}
\boldsymbol{d}^{(\nu, \omega)} (\Delta) = \\
\sum \limits_{\mu=0} ^{N_t-1} \left\{
\boldsymbol{\mathcal{C}}^{(\omega, \mu)} \mathrm{diag}\{
\boldsymbol{\tilde{s}}_{\mu} \} \boldsymbol{\Theta} _{i_\mu}^T
\boldsymbol{F}_N [\boldsymbol{I}_L, \boldsymbol{0}_{L \times
(N-L)}]^T \boldsymbol{h}^{(\nu,\mu)} \right\}.
\end{multline*}
For \textsl{any} integer $\Delta \in (-Q, 0) \cup (0, Q)$, with our
design condition we have
\begin{equation}
(\boldsymbol{1}_Q - \boldsymbol{l})^T \boldsymbol{l} ^{(\Delta)} >
0.
\end{equation}
Without loss of generality, suppose $(i_{\mu} +\Delta)_Q =
z_{\omega}$. Then, together with condition (C0), which implies
$i_{\mu} = i_{\mu'}\ \mathrm{iff}. \ \mu = \mu'$, we have
\begin{equation}\label{app-42}
\boldsymbol{\mathcal{C}}^{(\omega, \mu)} = \sqrt{N} \boldsymbol{I}_P
\ \&  \ \boldsymbol{\mathcal{C}}^{(\omega, \mu')} =
\boldsymbol{0}_{P \times P}, \mathrm{if} \ \mu' \ne \mu.
\end{equation}
Hence,
\begin{multline}
\boldsymbol{d}^{(\nu, \omega)} (\Delta) = \\
\sqrt{N} \mathrm{diag}\{ \boldsymbol{\tilde{s}}_{\mu} \}
\boldsymbol{\Theta} _{i_\mu}^T \boldsymbol{F}_N [\boldsymbol{I}_L,
\boldsymbol{0}_{L \times (N-L)}]^T \boldsymbol{h}^{(\nu,\mu)}.
\end{multline}
With the condition $P \ge L$, we immediately obtain
$\boldsymbol{d}^{(\nu, \omega)}(\Delta) \ne \boldsymbol{0}_{P}$ and
then $\boldsymbol{d}(\Delta) \ne \boldsymbol{0}_{N_r(N-N_tP)}$ for
\textsl{any} $\boldsymbol{h}^{(\nu,\mu)}(\ne \boldsymbol{0}_L)$.

    Combining the above two cases, we draw the conclusion that $\boldsymbol{d}(\Delta)
\ne \boldsymbol{0}_{N_r(N-N_tP)}$ for \textsl{any} $\boldsymbol{h}
^{(\nu,\mu)}(\ne \boldsymbol{0}_L)$ and \textsl{any} $\Delta \in
(-Q, 0) \cup (0, Q)$. This completes the proof.

\begin{figure*}[!b]
\hrulefill
\normalsize
\setcounter{mytempeqncnt}{\value{equation}}
\setcounter{equation}{40}
\begin{multline}\label{cor46}
\mathfrak{P}(\Delta_i) = \sum \limits _{\nu = 0} ^{N_r-1} \sum
\limits _{\mu = 0} ^{N_t-1} \sum \limits _{\mu' = 0} ^{N_t-1}
\left\{ (\boldsymbol{h}^{(\nu, \mu)})^H \boldsymbol{M}
^{(\mu, \mu', \mu)} (\Delta_i) \boldsymbol{h}^{(\nu, \mu)} \right\} \\
+ \sum \limits _{\nu = 0} ^{N_r-1} \sum \limits _{\mu = 0} ^{N_t-1}
\sum \limits _{\mu' = 0} ^{N_t-1} \sum \limits _{\mu'' = 0, \mu''
\ne \mu} ^{N_t-1} \left\{ (\boldsymbol{h}^{(\nu, \mu)})^H
\boldsymbol{M} ^{(\mu, \mu', \mu'')} (\Delta_i)
\boldsymbol{h}^{(\nu, \mu'')} \right\},
\end{multline}
\setcounter{equation}{\value{mytempeqncnt}}
\end{figure*}

\begin{figure*}[!b]
\vspace*{-15pt}
\normalsize
\setcounter{mytempeqncnt}{\value{equation}}
\setcounter{equation}{43} \setlength{\arraycolsep}{0.0em}
\begin{eqnarray}\label{cor49}
\boldsymbol{M} ^{(\mu, \mu', \mu)} (\Delta_i) &{}={}&
\biggl\{\frac{P} {N_t} + \frac{2P}{N_tQ} \Re \left\{ {[(Q-1)e^{j
\theta} -Q e^{j 2
\theta} + e^{j (Q+1)\theta}]} / {(1-e^{j \theta})^2} \right\} \biggl\} \cdot \boldsymbol{I}_L \nonumber \\
&{}={}& \frac {P}{N_t Q} \cdot \frac{1-\cos (Q\theta)}{1-\cos
\theta} \cdot \boldsymbol{I}_L.
\end{eqnarray}\setlength{\arraycolsep}{5pt}\hspace*{-3pt}
\setcounter{equation}{\value{mytempeqncnt}}
\end{figure*}

\begin{figure*}[!b]
\vspace*{-15pt}
\normalsize
\setcounter{mytempeqncnt}{\value{equation}}
\setcounter{equation}{44} \setlength{\arraycolsep}{0.0em}
\begin{eqnarray}\label{cor50}
[\boldsymbol{M} ^{(\mu, \mu', \mu'')} (\Delta_i)]_{l,l'} &{}={}&
{\alpha (\mu, \mu''; l, l')} / {Q} \cdot (1-e^{j Q\theta}) ^2 / [
(1-e^{j \theta})(e^{j\psi} - e^{j \theta}) e^{j (Q-1)\theta} ] \nonumber \\
&{}={}& {\alpha(\mu, \mu''; l, l')} / {Q} \cdot \frac{ 1-\cos
(Q\theta) } {\cos (\psi/2) - \cos(\theta-\psi/2)}  .
\end{eqnarray}\setlength{\arraycolsep}{5pt}\hspace*{-3pt}
\setcounter{equation}{\value{mytempeqncnt}}
\end{figure*}

\vspace*{11pt}
\section{}
\vspace*{12pt}

    In this appendix, we will discuss the uniqueness of $\hat{\varepsilon} _i =
\varepsilon _i $ in maximizing the estimation metric for
\textsl{any} $\boldsymbol{h} ^{(\nu,\mu)}(\ne \boldsymbol{0}_L)$.
For the case that $\varepsilon = \varepsilon_i$, it follows
immediately from Theorem~\ref{theorem1} that the uniqueness of
$\hat{\varepsilon}_i = \varepsilon_i$ in maximizing the estimation
metric for \textsl{any} $\boldsymbol{h}^{(\nu,\mu)}(\ne
\boldsymbol{0}_L)$ is guaranteed. Therefore, in the following, we
only consider the case that $\varepsilon \ne \varepsilon_i$, i.e.,
$\varepsilon_f \ne 0$.

    Define $\Delta_i = \varepsilon - \tilde{\varepsilon}_i  $.
Then, the estimation metric $\mathfrak{P}(\Delta_i)$ can be written
into an equivalent form as shown in (\ref{cor46}) at the bottom of
the next page. In (\ref{cor46}), \addtocounter{equation}{1}
\setlength{\arraycolsep}{0.0em}
\begin{eqnarray*}
\boldsymbol{M} ^{(\mu, \mu', \mu'')} (\Delta_i) &{} = {} &
\boldsymbol{S}_{\mu}^H \boldsymbol{G}_{\mu'}(\Delta_i)
\boldsymbol{S}_{\mu''}, \\
\boldsymbol{G}_{\mu}(\Delta_i) &{} = {}& \boldsymbol{D}_N
(-\Delta_i) \boldsymbol{F}_N^H \boldsymbol{\Theta} _{i_\mu}
\boldsymbol{\Theta} _{i_\mu}^T \boldsymbol{F}_N \boldsymbol{D}_N
(\Delta_i), \\
\boldsymbol{S}_{\mu} &{} = {}& \sqrt{N} \boldsymbol{F}_N^H
\boldsymbol{\Theta} _{i_\mu} \mathrm{diag}\{
\boldsymbol{\tilde{s}}_{\mu} \} \boldsymbol{\Theta} _{i_\mu}^T
\boldsymbol{F}_N [\boldsymbol{I}_L, \boldsymbol{0}_{L \times
(N-L)}]^T.
\end{eqnarray*} \setlength{\arraycolsep}{5pt}\hspace*{-3pt}
From the definition of $\boldsymbol{G}_{\mu}(\Delta_i)$, we can
obtain 
\begin{multline}
\boldsymbol{G}_{\mu}(\Delta_i) = \\
\left[ {\begin{array}{*{20}l}
{\boldsymbol{G}_\mu^{(0 , 0 )}(\Delta_i) } &  \cdots  & {\boldsymbol{G}_\mu^{( 0 , Q-1 )}(\Delta_i) }  \\
\vdots  &  \boldsymbol{G}_\mu^{( q , q' )}(\Delta_i)  &  \vdots   \\
{\boldsymbol{G}_\mu^{( Q-1 , 0 )}(\Delta_i) } & \cdots & {\boldsymbol{G}_\mu^{( Q-1 , Q-1 )}(\Delta_i) }  \\
\end{array} } \right],
\end{multline}
where
\[
\boldsymbol{G}_\mu^{( q , q' )}(\Delta_i) = {Q}^{-1} e^{j 2 \pi
(q-q') ( i_{\mu} - \Delta_i )/Q } \cdot \boldsymbol{I}_P.
\]
Define
\[
\boldsymbol{s}_{\mu} = \frac{1}{\sqrt{Q}} \boldsymbol{F}_P^H
\boldsymbol{\tilde{s}}_{\mu} .
\]
Then, we can also obtain that $\boldsymbol{S}_\mu$ is an $N \times
L$ column-wise circulant matrix with $[\boldsymbol{D}_Q (i_{\mu} P)
\boldsymbol{1}_Q] \otimes [\boldsymbol{D}_P (i_{\mu})
\boldsymbol{s}_{\mu}] $ being its first column vector. Hence, we
have
\begin{multline}\label{cor48} %
\hspace*{-20pt} [\boldsymbol{M} ^{(\mu, \mu', \mu'')}
(\Delta_i)]_{l,l'} = {\alpha
(\mu, \mu''; l, l')} / {Q} \\
\times \biggl \{ \sum\limits _{q=0}^{Q-1} \sum\limits
_{q'=0}^{Q-q-1} \left\{ e^{j q' \psi} e^{j q \theta} + e^{j (q+q')
\psi} e^{-j q \theta} \right\} - \sum\limits _{q=0}^{Q-1} e^{j q
\psi} \biggl \},
\end{multline}
where \setlength{\arraycolsep}{0.0em}
\begin{eqnarray*}
\alpha (\mu, \mu''; l, l') &{} = {} & e^{j 2 \pi (l i_{\mu} - l'
i_{\mu''}) / N}\cdot (\boldsymbol{s}_{\mu}^{(l)})^H \boldsymbol{D}_P
(i_{\mu''} - i_{\mu}) \boldsymbol{s}_{\mu''}^{(l')} ,\\
\theta &{} = {}& 2\pi (i_{\mu'} - i_{\mu} - \Delta_i ) /Q, \\
\psi &{} = {}& 2\pi (i_{\mu''} - i_{\mu}) /Q.
\end{eqnarray*} \setlength{\arraycolsep}{5pt}\hspace*{-3pt}

    When ${\mu} = {\mu''}$, imposing the condition that the
elements of $\boldsymbol{\tilde{s}} _{\mu}$ have constant amplitude,
which translates to the time-domain zero auto-correlation (ZAC)
property of $\boldsymbol{s}_{\mu}$ \cite{Minn_Periodic}, from
(\ref{cor48}) we immediately obtain (\ref{cor49}) as shown at the
bottom of the page. \addtocounter{equation}{1} When ${\mu} \ne
{\mu''}$, from (\ref{cor48}) we obtain (\ref{cor50}) as shown at the
bottom of the page. Define \addtocounter{equation}{1}
\[
\zeta ^{(\mu, \mu', \mu'')} (\Delta_i) = \left| {[\boldsymbol{M}
^{(\mu, \mu', \mu)} (\Delta_i)] _{l,l}} / {[\boldsymbol{M} ^{(\mu,
\mu', \mu'')} (\Delta_i)]_{l,l'}} \right|.
\]
Then, we immediately establish
\begin{multline}
\zeta ^{(\mu, \mu', \mu'')} (\Delta_i) = {P}/ [N_t \cdot
|\alpha(\mu, \mu''; l, l')| \hspace*{1pt}] \\
\times |\sin (\psi/2)| \cdot | \cot (\psi/2) - \cot(\theta/2) |.
\end{multline}

    The following remarks are now in order.

    \textit{Remark 1:} Both $[\boldsymbol{M} ^{(\mu, \mu', \mu)}
(\Delta_i)] _{l,l}$ and $[\boldsymbol{M} ^{(\mu, \mu', \mu'')}
(\Delta_i)]_{l,l'}$ are the period-$Q$ functions with respect to
$\Delta_i$. Fix the true CFO $\varepsilon$ and vary the candidate
ICFO $\tilde{\varepsilon}_i$ between $(-\lfloor Q/2 \rfloor,
Q-\lfloor Q/2 \rfloor]$. Then, we have
\begin{equation}
\Delta_i \in [\varepsilon -Q + \lfloor Q/2 \rfloor, \varepsilon +
\lfloor Q/2 \rfloor) \subset (-Q, Q).
\end{equation}
For description convenience, we employ $r(\Delta_i)$ to normalize
$\Delta_i$ from $[\varepsilon -Q + \lfloor Q/2 \rfloor, \varepsilon
+ \lfloor Q/2 \rfloor)$ to $[-\lfloor Q/2 \rfloor, Q-\lfloor Q/2
\rfloor)$, where
\begin{multline}
r(\Delta_i) = \Delta_i - [\mathrm{sign}(\Delta_i - Q + \lfloor
Q/2\rfloor) \\
+ \mathrm{sign}(\Delta_i + \lfloor Q/2 \rfloor)] Q/2 .
\end{multline}
Therefore, $[\boldsymbol{M} ^{(\mu, \mu', \mu)} (\Delta_i)] _{l,l}$
achieves its minimum $0$ when $r(\Delta_i) \in \{-\lfloor Q/2
\rfloor, -\lfloor Q/2 \rfloor +1, \cdots, $ $Q-\lfloor Q/2 \rfloor
-1 \} \setminus \{ r(i_{\mu'} - i_{\mu}) \}$ and its maximum $N/N_t$
when $r(\Delta_i) = r(i_{\mu'} - i_{\mu})$. While $\bigl|
[\boldsymbol{M} ^{(\mu, \mu', \mu'')} (\Delta_i)]_{l,l'} \bigl|$
achieves its minimum 0 when $r(\Delta_i) \in \{-\lfloor Q/2 \rfloor,
-\lfloor Q/2 \rfloor +1, \cdots, Q-\lfloor Q/2 \rfloor -1 \}$.

    \textit{Remark 2:} Both $[\boldsymbol{M} ^{(\mu, \mu', \mu)} (\Delta_i)] _{l,l}$
and $[\boldsymbol{M} ^{(\mu, \mu', \mu'')} (\Delta_i)]_{l,l'}$ are
the continuous functions with respect to $\Delta_i$. For any $\gamma
> 0$ (for example $\gamma = 10^{-2}$), there exists $\delta
>0$ (for example $\delta = 0.1$) which makes the relationships as shown in
(\ref{equ49}) and (\ref{equ50}) at the bottom of the next page hold.
\addtocounter{equation}{2}

\begin{figure*}[!b]
\vspace*{-9pt}
\hrulefill
\normalsize
\setcounter{mytempeqncnt}{\value{equation}}
\setcounter{equation}{48}
\begin{multline}\label{equ49}
[\boldsymbol{M} ^{(\mu, \mu', \mu)} (\Delta_i)] _{l,l}  < \gamma, \
\mathrm{if} \ r(\Delta_i) \in \bigcup \limits _{q = - \lfloor Q/2
\rfloor} ^{Q-\lfloor Q/2 \rfloor} \bigl\{ (q - \delta, q) \cup (q, q
+ \delta) \bigl\} \setminus  \bigl \{ (r(i_{\mu'} - i_{\mu}) -
\delta, r(i_{\mu'} - i_{\mu}) + \delta ) \bigl\} \\
\setminus \bigl \{ (- \lfloor Q/2 \rfloor - \delta, - \lfloor Q/2
\rfloor) \cup (Q-\lfloor Q/2 \rfloor, Q-\lfloor Q/2 \rfloor +
\delta) \bigl \},
\end{multline} 
\setcounter{equation}{\value{mytempeqncnt}}
\end{figure*}

\begin{figure*}[!b]
\vspace*{-15pt}
\normalsize
\setcounter{mytempeqncnt}{\value{equation}}
\setcounter{equation}{49}
\begin{multline}\label{equ50}
\bigl| [\boldsymbol{M} ^{(\mu, \mu', \mu'')} (\Delta_i)]_{l,l'}
\bigl|  < \gamma, \ \mathrm{if} \ r(\Delta_i) \in \bigcup \limits
_{q = - \lfloor Q/2 \rfloor} ^{Q-\lfloor Q/2 \rfloor} \bigl\{ (q -
\delta, q) \cup (q, q + \delta) \bigl\} \\
\setminus \bigl \{ (- \lfloor Q/2 \rfloor - \delta, - \lfloor Q/2
\rfloor) \cup (Q-\lfloor Q/2 \rfloor, Q-\lfloor Q/2 \rfloor +
\delta) \bigl \}.
\end{multline}
\setcounter{equation}{\value{mytempeqncnt}}
\end{figure*}

    \textit{Remark 3:} Assume $0 < |\varepsilon_f| < 0.5 $.
Then, for $r(\varepsilon_i - \tilde{\varepsilon}_i) \in \{-\lfloor
Q/2 \rfloor, -\lfloor Q/2 \rfloor +1, \cdots, Q-\lfloor Q/2 \rfloor
-1 \}$, we have (\ref{equ51}) as shown at the bottom of the next
page. \addtocounter{equation}{1}

\begin{figure*}[!b]
\vspace*{-15pt}
\normalsize
\setcounter{mytempeqncnt}{\value{equation}}
\setcounter{equation}{50}
\begin{multline}\label{equ51}
{{[\boldsymbol{M} ^{(\mu, \mu', \mu)} (\Delta_i | r(\varepsilon_i -
\tilde{\varepsilon}_i) = r(i_{\mu'} - i_{\mu}) )] _{l,l} }
\mathord{\left/
 {\vphantom {{[\boldsymbol{M} ^{(\mu, \mu', \mu)} (\Delta_i |
r(\varepsilon_i - \tilde{\varepsilon}_i) = r(i_{\mu'} - i_{\mu}))]
_{l,l}} {[\boldsymbol{M} ^{(\mu, \mu', \mu)} (\Delta_i |
r(\varepsilon_i - \tilde{\varepsilon}_i) \ne r(i_{\mu'} - i_{\mu})}
)] _{l,l} }} \right.
 \kern-\nulldelimiterspace} {[\boldsymbol{M} ^{(\mu, \mu', \mu)} (\Delta_i |
r(\varepsilon_i - \tilde{\varepsilon}_i) \ne r(i_{\mu'} - i_{\mu})}
)] _{l,l} } \\
= {\sin ^2 \{\pi [\varepsilon_f + r(\varepsilon_i -
\tilde{\varepsilon}_i) - r(i_{\mu'} - i_{\mu}) ]/Q\}} / { \sin^2
(\pi\varepsilon_f /Q) } \bigl |_{r(\varepsilon_i -
\tilde{\varepsilon}_i) \ne r(i_{\mu'} - i_{\mu})} \,  > 1 .
\end{multline}
\setcounter{equation}{\value{mytempeqncnt}}
\end{figure*}

    \textit{Remark 4:} $\zeta ^{(\mu, \mu', \mu'')} (\Delta_i)$ is the period-$Q$
continuous function with respect to $\Delta_i$, which achieves its
minimum 0 when $r(\Delta_i) = r(i_{\mu'} - i_{\mu''})$ and its
maximum $+ \infty$ when $r(\Delta_i) = r(i_{\mu'} - i_{\mu})$.
Impose the condition that $|\alpha (\mu, \mu''; l, l')| < 1/N_t$ for
$\mu \ne \mu''$. Then, there exists $\chi > 0$ (for example $\chi =
5$) which makes the relationship as shown in (\ref{equ52}) at the
bottom of the next page hold. \addtocounter{equation}{1}

\begin{figure*}[!b]
\vspace*{-15pt}
\normalsize
\setcounter{mytempeqncnt}{\value{equation}}
\setcounter{equation}{51}
\begin{multline}\label{equ52}
\zeta ^{(\mu, \mu', \mu'')} ( \Delta_i ) > P \cdot |\sin (\psi/2)|
\cdot | \cot (\psi/2) - \cot(\psi/2 + \pi \delta /Q ) | > \chi, \\
\mathrm{if} \ r(\Delta_i) \in \bigl\{ (- \lfloor Q/2 \rfloor,
r(i_{\mu'} - i_{\mu''}) - \delta) \cup  (r(i_{\mu'} - i_{\mu''}) +
\delta , Q-\lfloor Q/2 \rfloor) \bigl\}.
\end{multline}
\setcounter{equation}{\value{mytempeqncnt}}
\end{figure*}

\begin{figure*}[!b]
\vspace*{-15pt}
\normalsize
\setcounter{mytempeqncnt}{\value{equation}}
\setcounter{equation}{52}
\begin{multline}\label{P40}
\mathfrak{P}(\Delta_i) \doteq  0, \ \mathrm{if} \ r(\Delta_i) \in
\bigcup \limits _{q = - \lfloor Q/2 \rfloor} ^{Q-\lfloor Q/2
\rfloor} \bigl\{ (q - \delta, q) \cup (q, q + \delta) \bigl\}
\setminus \bigl\{ \bigcup \limits _{\mu, \mu' = 0} ^{N_t-1}
\bigl\{(r(i_{\mu'} - i_{\mu}) - \delta,
r(i_{\mu'} - i_{\mu}) + \delta ) \bigl \} \bigl\} \\
 \setminus \bigl \{ (- \lfloor Q/2 \rfloor - \delta,
- \lfloor Q/2 \rfloor) \cup (Q-\lfloor Q/2 \rfloor, Q-\lfloor Q/2
\rfloor + \delta)  \bigl \},
\end{multline}
\setcounter{equation}{\value{mytempeqncnt}}
\end{figure*}

\begin{figure*}[!b]
\vspace*{-15pt}
\normalsize
\setcounter{mytempeqncnt}{\value{equation}}
\setcounter{equation}{53}
\begin{multline}\label{P41}
\mathfrak{P}(\Delta_i) \doteq \sum \limits _{\nu = 0} ^{N_r-1} \sum
\limits _{\mu = 0} ^{N_t-1} \sum \limits _{\mu' = 0} ^{N_t-1}
\left\{ (\boldsymbol{h}^{(\nu, \mu)})^H \boldsymbol{M} ^{(\mu,
\mu', \mu)} (\Delta_i) \boldsymbol{h}^{(\nu, \mu)} \right\}, \\
\mathrm{if} \ r(\Delta_i) \in \bigcup \limits _{q = - \lfloor Q/2
\rfloor} ^{Q-\lfloor Q/2 \rfloor -1 } \bigl\{ [q + \delta, q+1
-\delta ] \bigl\} \bigcup \bigl\{ \bigcup \limits _{\mu, \mu' = 0}
^{N_t-1} \bigl\{ (r(i_{\mu'} - i_{\mu}) - \delta, r(i_{\mu'} -
i_{\mu}) + \delta) \bigl \} \bigl\}.
\end{multline}
\setcounter{equation}{\value{mytempeqncnt}}
\end{figure*}

    We use an example as shown in Fig. \ref{Jiang_Paper-TW-Oct-06-0846_fig5}
to illustrate the above remarks.

    Assume
\[
\mathrm{E}\{[\boldsymbol{h}^{(\nu,\mu)}]_l^* [\boldsymbol{h}^{(\nu
',\mu ')}]_{l'} \} = 0, \ \forall (\nu,\mu) \ne (\nu ',\mu ').
\]
Then, it follows from Remark 2 and 4 that (\ref{P40}) and
(\ref{P41}) as shown at the bottom of the page can be obtained,
respectively. \addtocounter{equation}{2} According to its
definition, $[\boldsymbol{M} ^{(\mu, \mu', \mu)} (\Delta_i)]_{l,l}$
must be one of the $Q$ possible values \textsl{whatever} $i_{\mu}$
or $i_{\mu'}$ is when we vary $\tilde{\varepsilon}_i$ from
$(-\lfloor Q/2 \rfloor + 1)$ to $(Q-\lfloor Q/2 \rfloor)$. From
Remark 3, we immediately obtain that
\begin{multline}
{{[\boldsymbol{M} ^{(\mu, \mu', \mu)} (\varepsilon_f)] _{l,l}}
\mathord{\left/ {\vphantom {{[\boldsymbol{M} ^{(\mu, \mu', \mu)}
(\varepsilon_f)] _{l,l}} {[\boldsymbol{M} ^{(\mu, \mu', \mu)}
(\varepsilon_f + ( i_{\mu'} - i_{\mu} )_Q)] _{l,l}}}} \right.
 \kern-\nulldelimiterspace} {[\boldsymbol{M} ^{(\mu, \mu', \mu)}
(\varepsilon_f + ( i_{\mu'} - i_{\mu} )_Q)] _{l,l}}} \gg 1, \\
\mathrm{if} \ ( i_{\mu'} - i_{\mu} )_Q >1,
\end{multline}
and the items involved in the right hand side of (\ref{P41}) achieve
their maximum when $r(\varepsilon_i - \tilde{\varepsilon}_i) =
r(i_{\mu'} - i_{\mu})$ for \textsl{any}
$\boldsymbol{h}^{(\nu,\mu)}(\ne \boldsymbol{0}_L)$.
\begin{figure}[!t]\vspace*{-2pt}
\centering
\includegraphics[width=0.48\textwidth]{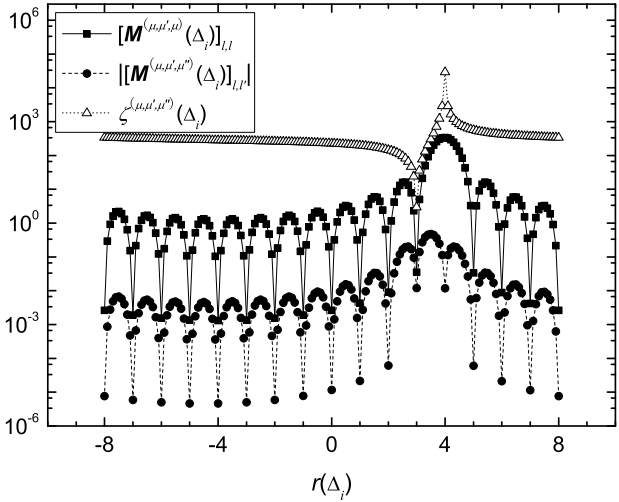}
\captionstyle{mycaptionstyle} \caption{ $[\boldsymbol{M} ^{(\mu,
\mu', \mu)} (\Delta_i)] _{l,l}$, $\bigl |[\boldsymbol{M} ^{(\mu,
\mu', \mu'')} (\Delta_i)]_{l,l'}\bigl|$ and $\zeta ^{(\mu, \mu',
\mu'')} ( \Delta_i )$ versus $r(\Delta_i)$ with $P=64$, $Q=16$, $N_t
= 3$, $i_{\mu} = 3$, $i_{\mu '} = 7$, $i_{\mu ''} = 4$. }
\label{Jiang_Paper-TW-Oct-06-0846_fig5}
\end{figure}
Then, we can establish that there are $N_r N_t$ items involved in
the right hand side of (\ref{P41}) which achieve the maximum when $
\tilde{\varepsilon}_i = \varepsilon_i $ (i.e., $\mu = \mu'$), and at
most $N_r \bigl(N_t-\mathop {\min}\limits_{1 \le q \le Q-1}\left\{
(\boldsymbol{1}_Q - \boldsymbol{l})^T \boldsymbol{l}^{(q)} \right\}
\bigl)$ items that achieve the maximum when $ \tilde{\varepsilon}_i
\ne \varepsilon_i $ (i.e., $\mu \ne \mu'$). Due to the random
snapshot channel energies, a closed form of training design
conditions to yield the uniqueness of $\hat{\varepsilon}_i =
{\varepsilon}_i$ is intractable. However, it follows from condition
(C3) and the above analysis that the reliability of the uniqueness
can be maximized by designing the training sequences to maximize
$\mathop {\min}\limits_{ \mu' \ne \mu} \bigl\{( i_{\mu'} - i_{\mu}
)_Q \bigl\} $ and $\mathop {\min}\limits_{1 \le q \le Q-1}\left\{
(\boldsymbol{1}_Q - \boldsymbol{l})^T \boldsymbol{l}^{(q)}
\right\}$.

\section*{Acknowledgment}
    The authors would like to thank the anonymous reviewers for
their valuable comments which helped to improve the quality of the
paper greatly.

\normalem

\bibliographystyle{IEEEtran}
\bibliography{Jiang_Paper-TW-Oct-06-0846}

\vspace*{-15pt} \vspace*{-2\baselineskip}
\begin{biography}[{\includegraphics[width=1in,height
=1.25in,clip,keepaspectratio]{./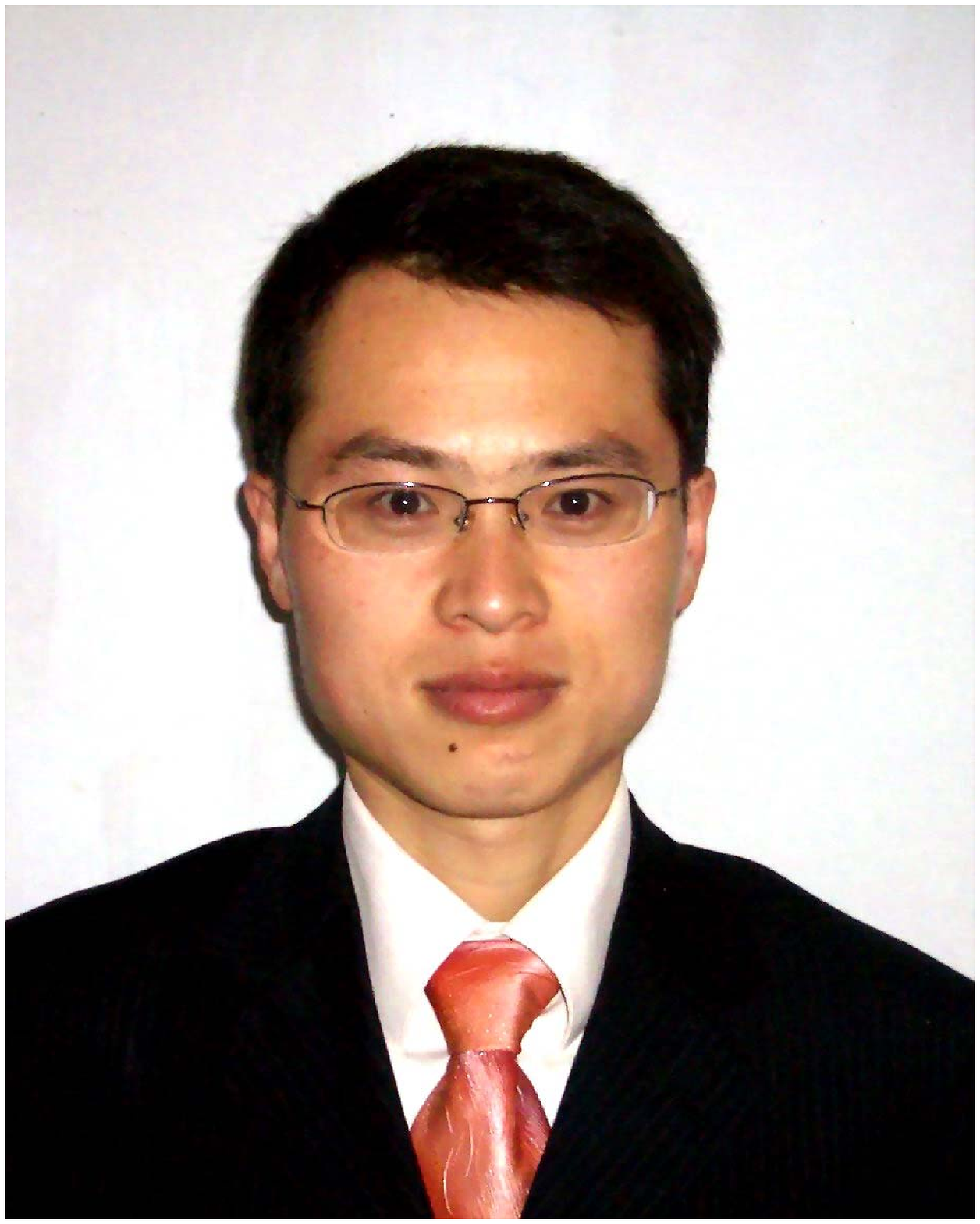}}]{Yanxiang
Jiang (S'03)} received the B.S. degree in electrical engineering
from Nanjing University, Nanjing, China, in 1999, and the M.E.
degree in radio engineering from Southeast University, Nanjing,
China, in 2003. He is currently working toward the Ph.D. degree at
the National Mobile Communications Research Laboratory, Southeast
University, Nanjing, China.

    He received the NJU excellent graduate honor from Nanjing University
in 1999. His research interests include mobile communications,
wireless signal processing, parameter estimation, synchronization,
signal design, and digital implementation of communication systems.
\end{biography}

\vspace*{-2\baselineskip}

\begin{biography}[{\includegraphics[width=1in,height
=1.25in,clip,keepaspectratio]{./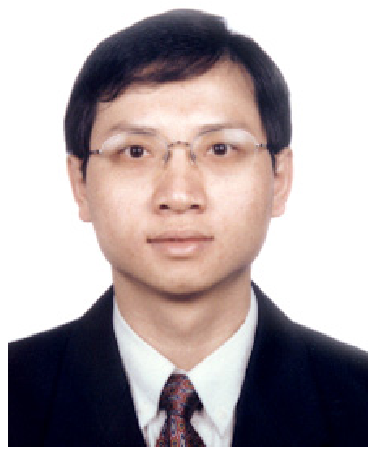}}]{Hlaing
Minn (S'99-M'01)} received his B.E. degree in Electronics from
Yangon Institute of Technology, Yangon, Myanmar, in 1995, M.Eng.
degree in Telecommunications from Asian Institute of Technology
(AIT), Pathumthani, Thailand, in 1997 and Ph.D. degree in Electrical
Engineering from the University of Victoria, Victoria, BC, Canada,
in 2001.

    He was with the Telecommunications Program in AIT as a laboratory
supervisor during 1998. He was a research assistant from 1999 to
2001 and a post-doctoral research fellow during 2002 in the
Department of Electrical and Computer Engineering at the University
of Victoria. Since September 2002, he has been with the Erik Jonsson
School of Engineering and Computer Science, the University of Texas
at Dallas, USA, as an Assistant Professor. His research interests
include wireless communications, statistical signal processing,
error control, detection, estimation, synchronization, signal
design, and cross-layer design. He is an Editor for the \emph{IEEE
Transactions on Communications}.
\end{biography}

\vspace*{-2\baselineskip}

\begin{biography}[{\includegraphics[width=1in,height
=1.25in,clip,keepaspectratio]{./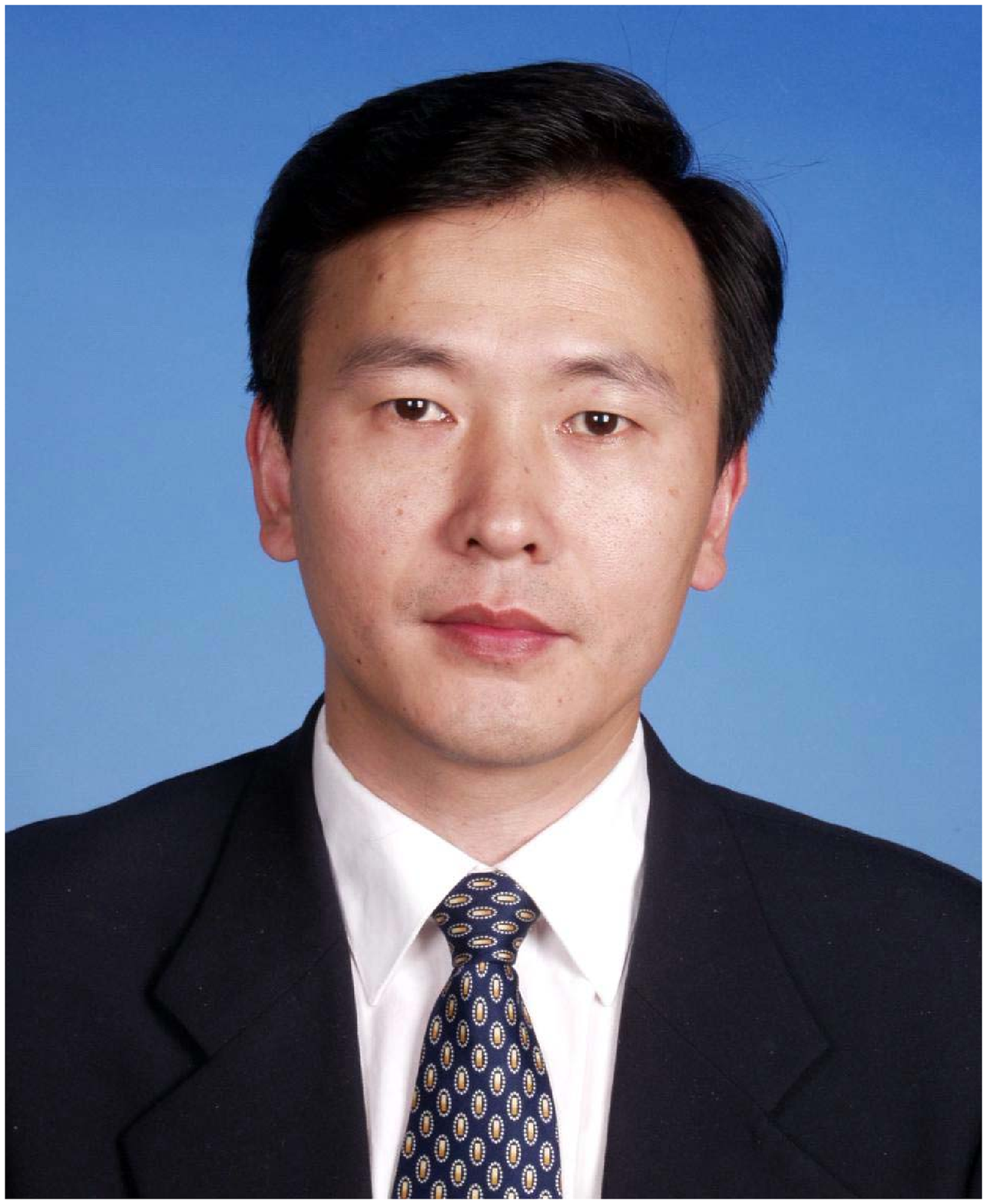}}]{
Xiqi Gao (SM'07)} received the Ph.D. degree in electrical
engineering from Southeast University, Nanjing, China, in 1997. He
joined the Department of Radio Engineering, Southeast University, in
April 1992. Now he is a professor of information systems and
communications. From September 1999 to August 2000, he was a
visiting scholar at Massachusetts Institute of Technology,
Cambridge, and Boston University, Boston, MA. His current research
interests include broadband multi-carrier transmission for beyond 3G
mobile communications, space-time wireless communications, iterative
detection/decoding, signal processing for wireless communications.

  Dr. Gao received the Science and Technology Progress Awards of
the State Education Ministry of China in 1998 and 2006. He is
currently serving as an editor for the \emph{IEEE Transactions on
Wireless Communications}.

\end{biography}

\vspace*{-2\baselineskip}

\begin{biography}[{\includegraphics[width=1in,height
=1.25in,clip,keepaspectratio]{./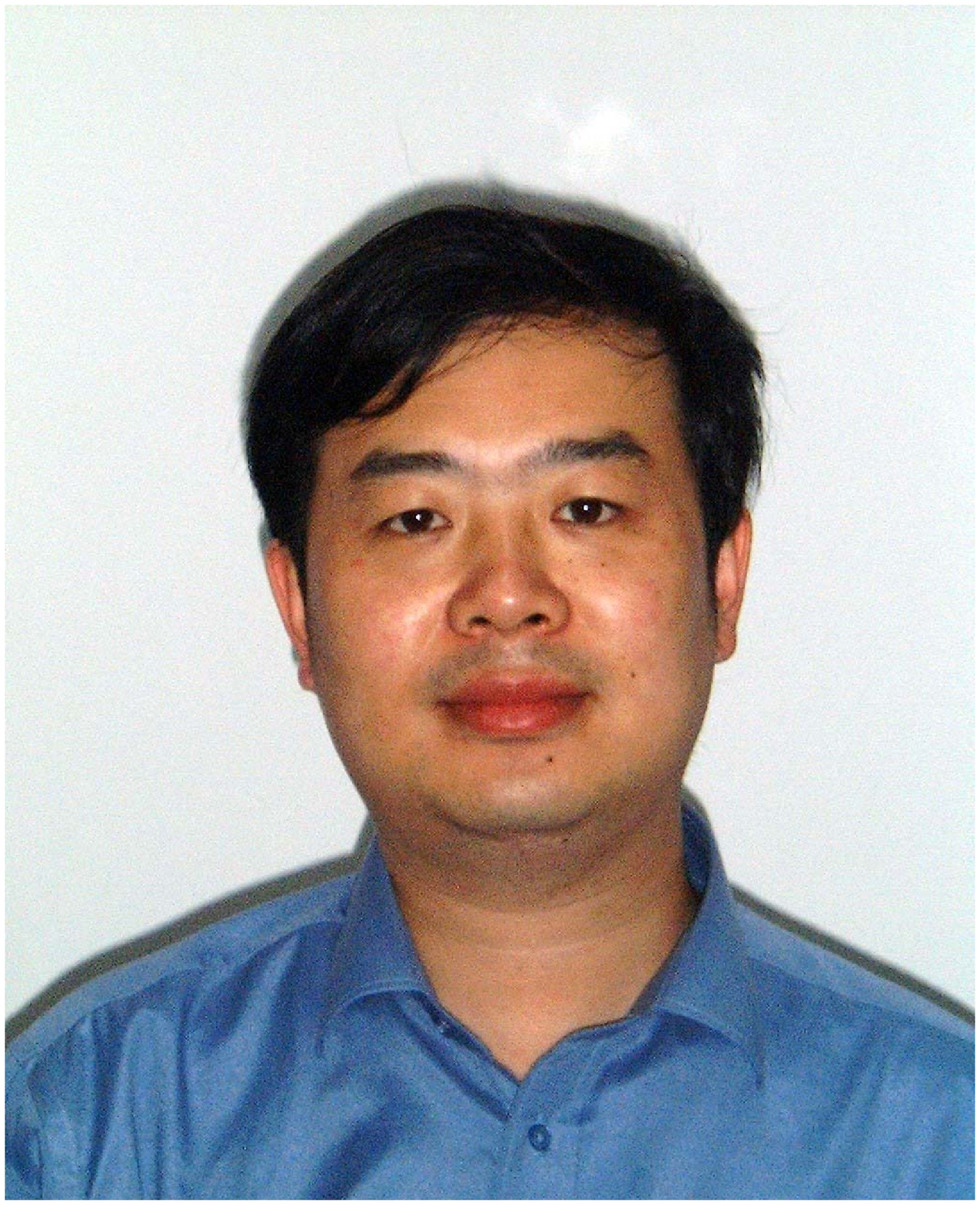}}]{Xiaohu
You} received the M.S. and Ph.D. degrees in electrical engineering
from Southeast University, Nanjing, China, in 1985 and 1988,
respectively.

    Since 1990 he has been working with National Mobile Communications
Research Laboratory at Southeast University, where he holds the
ranks of professor and director. From 1993 to 1997 he was engaged,
as a team leader, in the development of China's first GSM and CDMA
trial systems. He was the Premier Foundation Investigator of the
China National Science Foundation in 1998. From 1999 to 2001 he was
on leave from Southeast University, working as the chief director of
China's 3G (C3G) Mobile Communications R$\&$D Project. He is
currently responsible for organizing China's B3G R$\&$D activities
under the umbrella of the National 863 High-Tech Program, and he is
also the chairman of the China 863-FuTURE Expert Committee. He has
published two books and over 20 IEEE journal papers in related
areas.  His research interests include mobile communications,
advanced signal processing, and applications.
\end{biography}

\vspace*{-2\baselineskip}

\begin{biography}[{\includegraphics[width=1in,height
=1.25in,clip,keepaspectratio]{./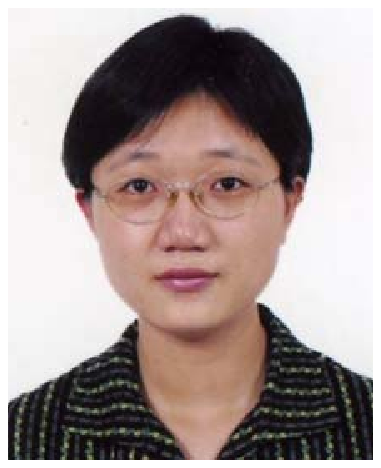}}]{Yinghui
Li (S'05)} received the B.E. and M.S. degree in Electrical
Engineering from the Nanjing University of Aeronautics and
Astronautics, Nanjing, China, in 2000 and 2003, respectively. She is
currently working toward the Ph.D. degree in Electrical Engineering
at University of Texas at Dallas. Her research interests are in the
applications of statistical signal processing in synchronization,
channel estimation and detection problems in broadband wireless
communications.
\end{biography}

\end{document}